 \newtheorem{thm}{Theorem}[section]
 \newtheorem{cor}[thm]{Corollary}
 \newtheorem{lemma}[thm]{Lemma}
 \newtheorem{prop}[thm]{Proposition}
 \theoremstyle{definition}
 \newtheorem{rem} {Remark}
 \newtheorem{assumption}{Assumption}
 \numberwithin{equation}{section}
\newcommand{\caA}{{\mathcal A}}
\newcommand{\caF}{{\mathcal F}}
\newcommand{\caH}{{\mathcal H}}
\newcommand{\caO}{{\mathcal O}}
\newcommand{\caP}{{\mathcal P}}
\newcommand{\caQ}{{\mathcal Q}}
\newcommand{\caU}{{\mathcal U}}
\newcommand{\caZ}{{\mathcal Z}}
\newcommand{\bbC}{{\mathbb C}}
\newcommand{\bbE}{{\mathbb E}}
\newcommand{\bbN}{{\mathbb N}}
\newcommand{\bbR}{{\mathbb R}}
\newcommand{\bbT}{{\mathbb T}}
\newcommand{\bbZ}{{\mathbb Z}}
\newcommand{\iu}{\mathrm{i}}
\newcommand{\str}{^{*}}
\newcommand{\ep}[1]{\mathrm{e}^{#1}}
\newcommand{\dd}{\,\mathrm{d}}
\newcommand{\Tr}{\mathrm{Tr}}
\newcommand{\fatlattice}{\bbZ_{\scriptstyle{(\caO(L^{-\infty}))}}}
\newcommand{\be}{\begin{equation}}
\newcommand{\ee}{\end{equation}}
\newcommand{\bea}{\begin{eqnarray}}
\newcommand{\eea}{\end{eqnarray}}
\newcommand{\beann}{\begin{eqnarray*}}
\newcommand{\eeann}{\end{eqnarray*}}
\newcommand{\adjoint}{\mathrm{ad}}
\newcommand{\error}{\caO(L^{-\infty})}
\newcommand{\diam}{\mathrm{diam}}
\newcommand{\dist}{\mathrm{dist}}
\newcommand{\supp}{\mathrm{supp}}
\newcommand{\ran}{\mathrm{ran}}
\newcommand{\rk}{\mathrm{rk}}
\newcommand{\T}{\Theta}
\newcommand{\vsmall}{\caO(L^{-\infty})}
\newcommand{\eqL}{\stackrel{\scriptscriptstyle L}{=}}
\newcommand{\totalgraph}{\Lambda}
\newcommand{\halftorus}{\Gamma}
\newcommand{\caatotal}{\caA}
\newcommand{\fraction}{p}
\title{Rational indices for quantum ground state sectors}
\author{Sven Bachmann}
\address{Department of Mathematics \\ The University of British Columbia \\ Vancouver, BC V6T 1Z2 \\ Canada}
\email{sbach@math.ubc.ca}
\author{Alex Bols}
\address{University of Copenhagen \\ DK-2100 Copenhagen \O \\ Denmark}
\email{alex-b@math.ku.de}
\author{Wojciech De Roeck}
\address{ Instituut Theoretische Fysica, KULeuven  \\
3001 Leuven  \\ Belgium }
\email{wojciech.deroeck@kuleuven.be}
\author{Martin Fraas}
\address{Department of Mathematics \\ Virginia Tech \\ Blacksburg, VA 24061-0123 \\ USA}
\email{fraas@vt.edu}
\date{\today}
\begin{document}

\begin{abstract}
We consider charge transport for interacting many-body systems with a gapped ground state subspace which is finitely degenerate and topologically ordered. To any locality-preserving, charge-conserving unitary that preserves the ground state space, we associate an index that is an integer multiple of $1/\fraction$, where $\fraction$ is the ground state degeneracy. We prove that the index is additive under composition of unitaries. This formalism gives rise to several applications: fractional quantum Hall conductance, a fractional Lieb-Schultz-Mattis theorem that generalizes the standard LSM to systems where the translation-invariance is broken, and the interacting generalization of the Avron-Dana-Zak relation between Hall conductance and the filling factor. 
\end{abstract}

\maketitle

\section{Introduction}\label{Sec: Intro}

The use of topology to study condensed matter systems is among the most influential developments of late 20th century theoretical physics   \cite{Haldane_Nobel, Laughlin_Nobel}. The first major application of topology appeared in the context of the quantum Hall effect \cite{TKNN, Thouless85, AvronSeilerSimon83} in the early 80', and topological concepts have since been applied systematically to discover and classify \emph{phases of  matter}
\cite{Laughlin_Fractional,read1999beyond, haldane1983fractional,wen1989vacuum,wen1989chiral,moore1991nonabelions,frohlich1997classification}.
The full classification for independent fermions is well developed, in particular by K-theory \cite{KitaevTable,schnyder2008classification,heinzner2005symmetry}, but a framework of similar scope is lacking for interacting systems, except possibly in 1 dimension where there is a classification of matrix product states \cite{chen2011classification,TurnerFermions,fidkowski2011topological}
and cellular automata \cite{gross2012index,cirac2017matrix}.
 
For non-interacting systems, several topological indices can be formulated as \emph{Fredholm-Noether indices} \cite{NoetherIndex,ASSIndex, BellissardSchuba} or, equivalently, as transport through a \emph{Thouless pump} \cite{Thouless83}. These formulations have been influential and insightful, in particular for non-translation-invariant systems \cite{ProdanBook}.  For example, the quantum Hall conductance \cite{ASS90}, the $\bbZ_2$-Kane-Mele index \cite{katsura2016,de2015spectral}, and the particle density can be expressed as (integer-valued) Fredholm indices. 
Let us briefly recall the setting in the easiest case of one-dimensional systems, playing on the Hilbert space $\ell^2(\bbZ)$. It takes as input a self-adjoint projection $P$,  that we think of as a Fermi projection of a local and gapped one-particle Hamiltonian, and a unitary $U$ that commutes with $P$, $[P,U]=0$.  The index is then defined as $\Tr[P(U^*1_{\bbN}U-1_{\bbN})]$, provided that $[U,1_{\bbN}]$ is trace-class, with $1_\bbN$ the orthogonal projection on the subspace $\ell^2(\bbN)$. The upshot is then that, on the one hand, this index is integer-valued, and, on the other hand, it can be identified with the average charge transported by $U$ across the origin, when starting from the fermionic state defined by the projection $P$, i.e.\ the filled Fermi sea; see~\cite{ASSIndex,KitaevHoneycomb,OurAnyons} for proofs and details.

In this paper we develop an interacting analogue to this formalism.  It is similar to the non-interacting theory in that it is very modular. 
Instead of a Fermi projection, it takes as input the ground state projection of a gapped many-body Hamiltonian. Instead of the one-particle unitary, we consider a many-body unitary $U$ commuting with $P$. We also need a notion of locally conserved charge which was absent in the single-particle setting, as the charge there was implicitly defined as the number of fermions. 
 The index is then constructed out of these data ($P,U$ and the charge) under appropriate regularity conditions. 
A striking difference is the possibility that the projector $P$ has a higher rank $\fraction>1$, corresponding to degenerate or quasi-degenerate ground states. 
In this situation,  the possible values of the index are now in $\tfrac1p \bbZ$. There are two generic situations that lead to a finite $p>1$, namely spontaneous breaking of a discrete symmetry and topological order. 
The first case can morally speaking be reduced to the case of $p=1$ by restricting to superselection sectors, in which case $U$ can fail to be a symmetry and only some power $U^n$ survives as a symmetry in the superselection sector. This leads to a rational index in a straightforward way. We cover this case explicitly in Section \ref{sectors}. 
The case of topological order \cite{WenTopological,Bravyi:2011ea} is more interesting. It means that the different vectors in the range of $P$ cannot be distinguished by local observables. Our framework is tailored towards this case, allowing for example for fractional quantum Hall conductance. In~\cite{HastingsMichalakis}, fractional quantization is discussed in the same setting as here but the proof sketched there relies on a different more involved strategy.
We would however like to point out that, at present, we do not know any model where we can rigorously confirm that this index takes a non-integer value in a topologically ordered subspace $\ran(P)$, as one expects to happen in fractional quantum Hall systems.

The present paper generalizes our previous work~\cite{BBDF} which was restricted to rank-$1$ projections. Just like there, different choices for $U$ correspond to different physical situations: Our index is the (fractional) quantum Hall conductance when $U$ is associated with an adiabatic increase of flux; it is the ground state filling factor when $U$ is a discrete translation, and the theorem is a new, fractional and multidimensional version of the Lieb-Schultz-Mattis theorem, see~\cite{HastingsLSM,BrunoLSM} for the integral case.
We go further and also prove that the index is additive under composition of unitaries, as it should be if it has the interpretation of a charge transport. Applied to the specific situation of a family of covariant Hamiltonians, the additivity yields a relation between two indices. In the context of the quantum Hall effect, it relates the filling factor to the quantum Hall conductance. This is well-known in the non-interacting setting \cite{AvronDanaZak}. Here, the relation is shown to hold in the interacting, possibly fractional setting, see also~\cite{Oshikawa} for similar results as well as~\cite{WatanabeDana} for a geometric perspective.

 Finally, we mention a technical difference with the non-interacting theory briefly discussed above.
 We do not formulate our theory in infinite volume from the start, mainly because the above concepts `ground state projector' and `many-body unitary' are in general not available in a Hilbert space setting if the volume, or more precisely the number of particles, is infinite. When fixing a reference state, one can consider a Hilbert space given by the GNS representation, but that is not versatile enough for our purposes, except possibly in the nondegenerate setting, $\fraction=1$, in general in one dimension and for some two-dimensional systems \cite{ogatapersonal}. Rather, we work here in large but finite volumes and all bounds are uniform in the volume. Strictly speaking, the index is therefore associated with sequences of operators rather than just the three operators $P,U$ and the charge. 
 

\section{Setting}\label{Sec: Setting}

Let ${\totalgraph}$ be a graph  equipped with the graph distance and having diameter $\diam({\totalgraph})=L$. We write $|{\totalgraph}|$ for the number of vertices. 
 With each vertex, we associate a copy of the Hilbert space $\bbC^n$. We denote by $\caH_{\totalgraph}$ the total Hilbert space of the system of dimension $n^{|{\totalgraph}|}$.
 We treat simultaneously spin systems, where $n$ is the number of components of the spin at each site, and fermionic systems, where $n = 2^f$ with $f$ the number of flavours of fermions.

The spatial structure of ${\totalgraph}$ is reflected in the algebra of observables $\caA$. To any element $O\in\caatotal$, we associate a spatial support $\supp(O) \subset {\totalgraph}$. The crucial property is the following: If $X,Y\subset{\totalgraph}$ are disjoint, and if $\supp(O_X)\subset X,\,\supp(O_Y)\subset Y$, then $[O_X,O_Y] = 0$. These notions of locality are completely standard and probably well-known to most of our readers. For convenience, a short exposition is provided in Appendix~\ref{app: locality}.

We will consider sequences of {models indexed by $L\in\bbN$} and be interested in their asymptotic properties as $L \to \infty$.  Writing the index $L$ everywhere would clutter the text and we choose not to do so. We will always have such a family in mind and the upshot of our results is that, unless otherwise specified, all constants and parameters can be chosen independently of $L$. In particular, the parameters $R_H,R_Q, m_H, m_Q, \gamma, \fraction$ to be introduced below are assumed to be independent of $L$. Furthermore, we will use constants $c,C$, whose value can change between equations, but they are also always independent of $L$. 
We will often say
$A = \caO(L^{-\infty})$ which means that the sequence of operators $A=A_L$ satisfies
$
\Vert A_L \Vert\leq \frac{C_k}{L^k}$
pointwise for all $k\in\bbN$. For completeness, we explain the large $L$ setting in more details in Appendix~\ref{app:L}.  

With these considerations in mind we now set up the main objects of our work. We keep this section abstract on purpose and refer the reader to Section~\ref{Sec:ADZ} or to~\cite{BBDF} for specific examples.

\subsection{Hamiltonian} \label{sec: hamiltonian} The Hamiltonian is a sum of local, finite range terms of the form
\begin{equation}\label{local Hamiltonian}
H:=\sum_{Z\subset{\totalgraph}} h_{Z},
\end{equation}
where
\begin{equation*}
\supp(h_Z) = Z,\qquad h_Z = 0 \text{ unless }\diam(Z) \leq R_H,
\end{equation*}
that are uniformly bounded: $\sup_{Z\subset\Lambda}\Vert h_Z\Vert \leq m_H$. Note that $R_H$ stands here for the interaction range, not the Hall conductance. As $\Lambda$ is assumed to be finite dimensional, see Section~\ref{sec: spatial} below, it follows that $\Vert H \Vert \leq C\vert {\totalgraph} \vert$.

\subsection{Charge}\label{sec:charge}
We consider local charge operators $q_Z$ with $\supp(q_Z)=Z$ satisfying
\begin{enumerate}
\item $q_Z=0$ unless $\diam(Z) \leq R_Q$,
\item $\sup_{Z\subset\Lambda}\Vert q_Z\Vert \leq m_Q$,
\item $\sigma(q_Z)\subset\bbZ$ for all $Z$, where $\sigma(A)$ denotes the spectrum of $A$,
\item $[q_Z,q_{Z'}] = 0$ for all $Z,Z'$.
\end{enumerate}
The total charge in $S\subset{\totalgraph}$ is defined as
\begin{equation}\label{local charge}
Q_S:= \sum_{Z \subset S}  q_Z.
\end{equation}
Finally, we assume that the Hamiltonian conserves this charge, namely
\begin{equation*}
[Q_\Lambda, H] = 0.
\end{equation*}
Using the properties of $q_Z$, it follows that we can choose the decomposition $H=\sum_Z h_Z$ such that
\begin{equation}\label{H charge conservation}
[Q_\Lambda,h_Z] = 0
\end{equation}
for all $Z\subset{\totalgraph}$. This implies in particular that, for any $S\subset{\totalgraph}$, the commutator $[Q_S,H]$ is supported in a strip along the boundary of $S$.

\subsection{Spatial structure}\label{sec: spatial}
For a set $S$, we define $S_{(r)}$ to be its $r$-fattening, namely
\begin{equation}\label{fattening}
S_{(r)}:=\{x\in{\totalgraph}:\dist(x,S)\leq r\}
\end{equation}
and its boundary to be
\begin{equation*}
\partial S:= S_{(1)}\cap ({\totalgraph}\setminus S)_{(1)}.
\end{equation*}
We can now state the two conditions imposed on the graph ${\totalgraph}$:
\begin{enumerate}
\item  
${\totalgraph}$ has a finite spatial dimension in the sense of $\sup_{x\in{\totalgraph}}|\{x\}_{(r)} | \leq  C (1+r)^d$ for all $r\geq0$, i.e.\ the size of balls grows at most polynomially with the radius.
\item There is a set ${\halftorus} \subset {\totalgraph}$ such that
\begin{equation}\label{Spatial structure}
\partial {\halftorus}  =\partial_{-} \cup  \partial_{+},\qquad   \dist(\partial_{-},\partial_{+}) \geq cL .
\end{equation}
\end{enumerate} 
These assumptions are illustrated in Figure~\ref{fig:GammaTorus}, in the case where ${\totalgraph}$ is a discrete $2$-torus.
\begin{figure}
  \begin{center}
    \includegraphics[width=0.35\textwidth]{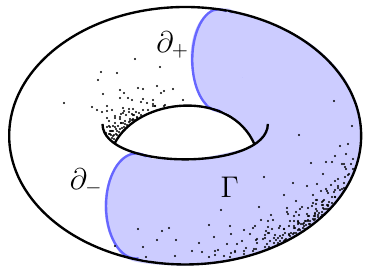}
  \end{center}
  \caption{A typical realization of the assumed global spatial structure. Here ${\totalgraph}$ is a two-dimensional discrete torus $(\bbZ/L\bbZ)^2$ and ${\halftorus}$ is one half of the torus with two disjoint boundaries $\partial_\pm$.}
  \label{fig:GammaTorus}
\end{figure}
We will consider the transport of the charge across one of ${\halftorus}$'s boundaries.

\subsubsection{Almost local operators and quasi-local unitaries}\label{sec: almost local}
We will often need to localize sequences of operators approximately, in a looser sense than by their support. To explain this, and only in this section, we keep the $L$-dependence explicit in order to be maximally clear, see also Appendix~\ref{app:L}. A sequence of operators $A = A_L$ is \emph{almost supported} in a sequence of sets $Z=Z_L$ if there are sequences $A_{r}=A_{L,r}, r\in\bbN$ with $\supp(A_{L,r}) \subset (Z_L)_{(r)}$, see~(\ref{fattening}), such that 
\begin{equation*}
\Vert A_{L}-A_{L,r}\Vert=\Vert A_L\Vert |Z_L|\caO(r^{-\infty}).
\end{equation*}
We denote\footnote{Our notation differs here from the tradition of reserving this symbol for the smaller algebra of observables that are strictly supported in $Z$.} the set of sequences of operators that are almost supported in $Z$ by $\caA_Z$. With this notation, a sequence of unitaries $U$ is called \emph{locality-preserving} if
\begin{equation*}
U\str \caA_Z U\subset\caA_Z
\end{equation*}
for all sequences of sets $Z$.  

\section{The index theorem}

With these general properties set up, we can now state the results announced above and the assumptions they require. 

\begin{assumption}[Gap]\label{ass: gap}
Let $E_1 \leq E_2 \leq \ldots \leq E_{n^{|\Lambda|}}$ be the eigenvalues of $H$, counted with multiplicities. There are $L$-independent constants  $\gamma>0,\Delta>0$ and $p$ such that
\begin{equation*}
E_{p+1}-E_{p} \geq\gamma\qquad\text{and}\qquad E_p-E_1 \leq \Delta,
\end{equation*}
and $\gamma>2\Delta$.
\end{assumption}
 We refer to the rank-$p$ spectral projector corresponding to the spectral patch $\{E_1,\ldots, E_p\}$ as $P$ and its range is called the `the ground state space'.

One consequence of the gap assumption is the following exponential clustering result. 

\begin{prop}\label{prop:clustering} 
For any $A\in\caA_X,B\in\caA_Y$ with $X\cap Y = \emptyset$ and for any normalized $\Omega \in \ran(P)$, 
\begin{equation}\label{Assumption: Clustering}
\vert\langle \Omega, AB\Omega\rangle - \langle \Omega, APB\Omega\rangle\vert = \Vert A\Vert \Vert B\Vert \vert X\vert\vert Y \vert \caO(d(X,Y)^{-\infty}).
\end{equation}
\end{prop}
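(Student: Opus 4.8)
The plan is to derive exponential clustering from the spectral gap via a standard spectral-flow/filter-function argument, adapted to the degenerate ground state space. First I would reduce to the case $\langle \Omega, A\Omega\rangle = 0$ by replacing $A$ with $A - \langle \Omega, A\Omega\rangle \id$; the correction term is harmless because $\langle \Omega, (A - \langle\Omega,A\Omega\rangle)B\Omega\rangle - \langle\Omega,(A-\langle\Omega,A\Omega\rangle)PB\Omega\rangle$ differs from the original quantity by $\langle\Omega,A\Omega\rangle(\langle\Omega,B\Omega\rangle - \langle\Omega,PB\Omega\rangle)$, and $PB\Omega$ has $\Omega$-component equal to $\langle\Omega,B\Omega\rangle\Omega$ plus contributions from the other ground states. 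Actually the cleaner route is to write the left-hand side as $|\langle\Omega, A(\id-P)B\Omega\rangle|$ directly, since $\langle\Omega,AB\Omega\rangle - \langle\Omega,APB\Omega\rangle = \langle\Omega, A(\id-P)B\Omega\rangle = \langle A\str\Omega, (\id - P) B\Omega\rangle$, so it suffices to bound $\|(\id-P)B\Omega\|$ paired against $(\id-P)A\str\Omega$ — but that naive bound loses the locality, so instead I keep both operators and use the energy separation.

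The core estimate uses a quasi-analytic filter. Since $\gamma > 2\Delta$, pick an energy $\mu$ with $E_p < \mu < E_{p+1}$ and a gap of width at least $\gamma/2$ on either side; let $f$ be a smooth function that is $1$ on $(-\infty, E_p]$, $0$ on $[E_{p+1},\infty)$, with all derivatives controlled, so that $f(H) = P$. Write $\id - P = g(H)$ with $g = 1-f$ supported on $[E_{p+1},\infty)$. Using the Fourier/Hörmander representation $g(H) = \int \hat g(t) e^{\iu t H}\dd t$ (or the quasi-analytic extension of the resolvent), one gets $\langle A\str\Omega, (\id-P) B\Omega\rangle = \int \hat g(t)\, \langle A\str\Omega, e^{\iu t H} B\Omega\rangle\dd t = \int \hat g(t)\, e^{\iu t E_1}\langle A\str\Omega, \tau_t(B)\Omega\rangle\dd t$ after using $P$-invariance of $H$ on $\Omega$ on the left side — more precisely inserting $e^{-\iu tH}$ acting on $\Omega$ gives a scalar phase only up to $\Delta$, which is why the condition $\gamma > 2\Delta$ matters. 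Then Lieb-Robinson bounds (standard for Hamiltonians of the form in Section~\ref{sec: hamiltonian}) give $\|[A, \tau_t(B)]\| \le C \|A\|\|B\| |X| |Y| (e^{v|t|} - 1) e^{-\dist(X,Y)}$ or the sharper almost-local version, and combining the exponential spatial decay for $|t| \lesssim \dist(X,Y)/(2v)$ with the rapid decay of $\hat g$ for $|t| \gtrsim \dist(X,Y)/(2v)$ (here using that $g$ is smooth, hence $\hat g$ decays faster than any polynomial, with the decay rate degrading only polynomially in the smoothness parameters which are $L$-independent) yields the stated $\caO(\dist(X,Y)^{-\infty})$ bound, with the volume factors $|X||Y|$ coming straight from the Lieb-Robinson commutator estimate.

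The main obstacle, and the point requiring care, is handling the \emph{degeneracy}: when $P$ has rank $p>1$, one cannot simply say $e^{\iu t H}\Omega = e^{\iu t E_j}\Omega$ for $\Omega \in \ran(P)$, because $\Omega$ need not be an eigenvector. This is precisely where $\Delta$ enters: on $\ran(P)$ one has $\|e^{\iu tH}\Omega - e^{\iu t E_1}\Omega\| \le |t|\Delta$, so $e^{\iu tH}\Omega = e^{\iu t E_1}(\Omega + O(|t|\Delta))$, and the error must be absorbed. The trick is to split $g = g_1 + g_2$ where $g_1$ is supported just above the gap and captures the interval $[E_{p+1}, \infty)$ while being smooth, and to note that the correction $O(|t|\Delta)$ vector still lies in $\ran(P)$, against which $(\id-P)$ annihilates — so in fact the error term contributes $\langle A\str\Omega, (\id-P) \cdot O(|t|\Delta\text{-vector in }\ran P)\rangle = 0$. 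Thus the degeneracy costs nothing beyond requiring $\mu$ to sit strictly above the spectral patch, which is exactly what Assumption~\ref{ass: gap} provides; the condition $\gamma > 2\Delta$ guarantees enough room to place the smooth cutoff $f$ with bounded (L-independent) derivatives. I would also note that an alternative, fully self-contained route avoids Fourier analysis entirely: expand $(\id-P)$ as a contour integral of the resolvent $\frac{1}{2\pi\iu}\oint_\Gamma (z-H)^{-1}\dd z$ around the complement of the low-energy patch, and use the Lieb-Robinson bound on $\langle A\str\Omega, (z-H)^{-1}B\Omega\rangle$ via the integral representation $(z-H)^{-1} = \iu\int_0^\infty e^{\iu t(z-H)}\dd t$ for $\Im z > 0$; this is the approach in~\cite{HastingsLSM, BBDF} and is probably the cleanest to write out in an appendix.
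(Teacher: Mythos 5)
There is a genuine gap, and it sits exactly at the point you flag as "requiring care". You apply the spectral filter to $H$ itself, writing $\id-P=g(H)$ and expanding $\langle A\str\Omega,(\id-P)B\Omega\rangle=\int\hat g(t)\langle A\str\Omega,e^{\iu tH}B\Omega\rangle\dd t$; to bring in Lieb--Robinson you must rewrite $e^{\iu tH}B\Omega=\tau_t(B)e^{\iu tH}\Omega$ and replace $e^{\iu tH}\Omega$ by a pure phase, incurring an error vector $v_t=(e^{\iu tH}-e^{\iu tE_1})\Omega$ with $\Vert v_t\Vert\leq |t|\Delta$. Your claim that this error is killed because "$(\id-P)$ annihilates a vector in $\ran(P)$" is not correct: once $g(H)$ has been expanded into the time integral there is no projector left in the expression, and the error actually enters as $\int\hat g(t)\langle A\str\Omega,\tau_t(B)v_t\rangle\dd t$, where $\tau_t(B)v_t$ is in general not in $\ran(P)$. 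Since Assumption~\ref{ass: gap} allows $\Delta$ to stay bounded away from $0$ as $L\to\infty$ (the proposition is explicitly stated under this weaker hypothesis), this uncancelled term is $\caO(\Delta)=\caO(1)$, not $\caO(\dist(X,Y)^{-\infty})$, so the core estimate fails for a degenerate patch of nonzero width. Relatedly, your stated role for $\gamma>2\Delta$ ("room to place the smooth cutoff") is not where that condition is needed: a cutoff on the spectrum of $H$ only needs the gap $E_{p+1}-E_p\geq\gamma$.

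The paper avoids this problem by filtering in the energy-\emph{difference} domain rather than the energy domain: it defines $\caQ(A)=g(\mathrm{ad}_H)(A)$ with $g$ equal to $0$ on $[-\Delta,\infty)$ and $1$ on $(-\infty,-\gamma+\Delta]$ (this is exactly where $\gamma>2\Delta$ enters, since it makes $-\gamma+\Delta<-\Delta$). Writing $\caQ(A)=\sum_{j,k}g(E_j-E_k)P_jAP_k$ one gets the two exact identities $\caQ(A)P=0$ and $P\caQ(A)=PA(\id-P)$, with no $\Delta$-sized error at all, and then $\langle\Omega,A(\id-P)B\Omega\rangle=\langle\Omega,\caQ(A)B\Omega\rangle=\langle\Omega,[\caQ(A),B]\Omega\rangle$, which is small because $\caQ(A)$ is almost supported in $X$ by the integral representation $\caQ(A)=\frac{1}{\sqrt{2\pi}}\int\hat g(t)e^{\iu tH}Ae^{-\iu tH}\dd t$ together with Lieb--Robinson. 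A further, more minor, inaccuracy in your write-up: a smoothed step function does not have a rapidly decaying Fourier transform (it does not decay at infinity); $\hat g$ is a tempered distribution, a $\delta$ at $t=0$ plus a principal-value part modulated by a Schwartz factor, and the paper spends the second half of the appendix (Plemelj--Sokhotski, the split $\vert t\vert\lessgtr T$, the local projection $\Pi_R$) making the locality of $\caQ(A)$ rigorous in spite of this. If you want to keep a filter-on-$H$ argument, you would have to either assume $\Delta\to0$ (recovering the Nachtergaele--Sims/Hastings--Koma setting, which is precisely what the proposition is meant to avoid) or restructure the argument around $\mathrm{ad}_H$ as above.
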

This is a slightly weaker statement than~\cite{BrunoClustering,HastingsClustering} in that the decay is only superpolynomial, but also under weaker assumptions: it holds in finite volumes and with an energy width of the ground state patch $\Delta$ that may remain bounded away from $0$ in the infinite volume limit. We refer to Appendix~\ref{A:Clustering} for a proof of this finite volume clustering theorem. As seen there, the condition $\gamma>2\Delta$ is of technical nature.

The second assumption is about a locality-preserving unitary, see Section~\ref{sec: almost local}. As discussed in the introduction, this $U$ is the unitary implementing the process transporting charge, whether by translation, flux insertion, or else.
\begin{assumption}[Charge and locality preserving $U$]\label{ass: unitary}
There is a locality-preserving unitary $U$ that leaves $\ran(P)$ invariant
\begin{equation}\label{UP}
[U,P]=\vsmall
\end{equation}
and that conserves the total charge
$$
[U,Q_{{\totalgraph}}]=0.
$$
\end{assumption}
Since $U$ is locality-preserving and the charge is a sum of local terms, this leads to  
 a continuity equation:  for any spatial set $Z$, 
$$
U^* Q_{Z} U -Q_{Z}  \in \caA_{\partial Z}.
$$
In words, the net charge transported by $U$ in or out of any set is supported near the boundary of the set. Applying this assumption to $Z={\halftorus}$ and using the spatial structure introduced in Section~\ref{sec: spatial},  we get 
\begin{equation}\label{eq: splitting}
U^* Q_{{\halftorus}} U-Q_{{\halftorus}} =:T_-+T_+ 
\end{equation}
with $T_{\pm} \in \caA_{\partial_{\pm}}$ and $\Vert T_{\pm}\Vert\leq C\vert{\totalgraph}\vert$.
We naturally interpret $T_{\pm}$ as the operators of charge transport across the boundaries $\partial_\pm$.  This defines $T_\pm$ only up to vanishing tails and an arbitrary additive constant. We fix a choice such that
\begin{equation}\label{choice of T-}
\ep{2\pi\iu(Q+T_\pm)} = 1+ \vsmall,\qquad\Vert T_{\pm}\Vert\leq C\vert{\totalgraph}\vert,
\end{equation}
where we denoted $Q=Q_{{\halftorus}}$ as we shall do from here onwards. Such a choice exists. Indeed, for any $\widetilde T_\pm$ satisfying~(\ref{eq: splitting}), we have, by integrality of the spectrum of $Q$ and the assumption (\ref{Spatial structure}) about the spatial structure, that
\begin{align}
1 &= \ep{2\pi\iu U\str Q U} = \ep{2\pi\iu(Q+\widetilde T_- + \widetilde T_+ )} + \vsmall \nonumber \\
&= \ep{2\pi\iu(Q+ \widetilde T_- )}\ep{2\pi\iu(Q+\widetilde  T_+ )} + \vsmall \label{eq:nu}
\end{align}
with $\exp(2\pi\iu(Q+\widetilde T_\pm )) \in \caA_{\partial_{\pm}}$. Hence there exists $\nu$ such that 
$$
\exp(2\pi\iu(Q+\widetilde T_\pm )) = e^{\pm \iu \nu} + \vsmall.
$$
Then $T_\pm :=\widetilde T_\pm \mp \frac{\nu}{2\pi}$ satisfies both~(\ref{eq: splitting}) and~(\ref{choice of T-}). We can now state our main result. For any $\epsilon > 0$, we denote
\begin{equation*}
\bbZ_{\scriptstyle{(\epsilon)}} := \{x\in\bbR:\mathrm{dist}(x,\bbZ) < \epsilon\}.
\end{equation*}
\begin{thm}\label{thm:Index}
If Assumptions \ref{ass: gap} and \ref{ass: unitary} hold, then 
\begin{equation*}
\Tr(P T_-) \in \fatlattice.
\end{equation*}
\end{thm}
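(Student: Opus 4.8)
The plan is to show that $\Tr(PT_-) = \tfrac{1}{p}\cdot k + \vsmall$ for some integer $k$, by comparing $\Tr(PT_-)$ with $\tfrac1p$ times the analogous quantity in a ``stacked'' $p$-fold version of the system, where the corresponding index is forced to be an integer. Concretely, I would first argue that $\Tr(PT_-)$ is, up to $\vsmall$, the expectation of the charge-transport operator $T_-$ in the normalized mixed state $\rho = \tfrac1p P$, and that by exponential clustering (Proposition~\ref{prop:clustering}) and the locality of $T_\pm$ this expectation is essentially insensitive to which vector(s) of $\ran(P)$ one uses — the topological-order idea entering here. The key object is the unitary $V_- := \ep{2\pi\iu(Q+T_-)}$, which by~(\ref{choice of T-}) satisfies $V_- = \id + \vsmall$, is almost supported on $\partial_-$, and commutes with $Q$ modulo $\vsmall$; similarly $V_+ = \ep{2\pi\iu(Q+T_+)}$. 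Since $U^* V U = V_- V_+ + \vsmall$ with $V = \ep{2\pi\iu Q}$ and $[V,P] = \vsmall$ (from~(\ref{UP})), all of $V, V_\pm$ act as scalars on $\ran(P)$ up to $\vsmall$, and the phase $e^{2\pi\iu \Tr(PT_-)/1}$-type quantity is the obstruction we must quantize.

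Second, I would set up the $p$-fold stacking: form $\widehat\caH = \caH^{\otimes p}$ with Hamiltonian $\widehat H = \sum_{j=1}^p H^{(j)}$ (tensor factors), charge $\widehat Q_S = \sum_j Q_S^{(j)}$, and unitary $\widehat U = \bigotimes_j U^{(j)}$. The stacked ground-state projector is $\widehat P = P^{\otimes p}$, which has rank $p^p$ — but the relevant point is to instead use a \emph{rank-one} projector obtained by restricting to a suitable one-dimensional subspace on which the index computation localizes; more robustly, one works with the fact that $\Tr(\widehat P \,\widehat T_-) = p\,\Tr(PT_-) + \vsmall$ by additivity of charge transport across $\partial_-$ under tensor products, while simultaneously $\widehat P$, although degenerate, has the property that the \emph{per-sector} transported charge is the same for each of the $p^p$ product vectors (again by clustering), so $\Tr(\widehat P\,\widehat T_-)$ equals $p^p$ times a common value $\tau$, giving $\tau = \tfrac1p \Tr(PT_-) + \vsmall$. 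The heart of the argument is then to prove that $p\cdot\tau \in \bbZ + \vsmall$, equivalently $\Tr(PT_-) \in \tfrac1p(\bbZ + \vsmall)$: one shows that the phase $e^{2\pi\iu\, p\,\tau}$ is trivial because, upstairs, some genuine index (the $p=1$-type construction applied to an appropriate single sector of the stacked system, or to a symmetric subspace carrying a nondegenerate ground state of a deformed Hamiltonian) must be an integer, via the standard winding/Fredholm-type argument: $T_-$ is almost supported near $\partial_-$, $\exp(2\pi\iu(Q+T_-)) = \id + \vsmall$, and the only ambiguity left is the additive integer fixed by comparing $Q$ restricted to regions straddling $\partial_-$.

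Third, I would make the integrality of the underlying $p=1$ index precise. The cleanest route: using~(\ref{eq:nu}) and~(\ref{choice of T-}), write $\Tr(PT_-)$ as a difference of two ``half-space charge'' expectations in the ground state, analogous to the noninteracting $\Tr[P(U^*1_\bbN U - 1_\bbN)]$, and invoke that in the nondegenerate case this is an integer — a fact that for $p=1$ reduces to~\cite{BBDF}; then the general case follows by the stacking reduction of the previous paragraph, since the stacked system with $\widehat P$ replaced by a projection onto a nondegenerate ground sector (which exists after an arbitrarily small symmetry-respecting perturbation that splits the $p$-fold degeneracy, using the gap $\gamma > 2\Delta$ to keep everything well-defined and the quantities continuous) yields an honest integer index equal to $p\,\Tr(PT_-)$ up to $\vsmall$.

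\textbf{Main obstacle.} The delicate step is the reduction from the $p$-fold degenerate $\widehat P$ to a nondegenerate sector while retaining both (i) locality-preservation and charge-conservation of the deformed $\widehat U$, and (ii) control that the transported charge across $\partial_-$ is unchanged modulo $\vsmall$ under this deformation. This is exactly where topological order (indistinguishability of the ground states by local operators, hence by $T_-\in\caA_{\partial_-}$ up to clustering errors) is essential and where a naive symmetry-breaking perturbation might fail — one cannot literally split the degeneracy with a \emph{local} field if the order is topological. The honest version of the argument therefore replaces ``perturb to split the degeneracy'' with a direct computation: show that $p\cdot\Tr(PT_-)$ equals the winding number of the loop $s\mapsto \ep{2\pi\iu s(Q+T_-)}$ conjugated appropriately, a manifestly integer quantity, with the factor $p$ emerging from the trace over the $p$-dimensional range of $P$ of an operator that is scalar-on-$\ran(P)$ up to $\vsmall$. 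Getting this winding-number identity, with all $\vsmall$ errors genuinely uniform in $L$ and not accumulating over the $\caO(L^d)$-sized boundary strip, is the crux.
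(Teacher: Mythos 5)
Your proposal does not close the argument under the stated hypotheses, and the missing piece is exactly the paper's central mechanism. The reduction you sketch hinges twice on the claim that the transported charge is the same for every vector of $\ran(P)$ (``insensitive to which vector\dots the topological-order idea entering here'', and again when you assert that all product ground states of the stacked system share a common value $\tau$ ``by clustering''). Under the hypotheses of Theorem~\ref{thm:Index} only Assumptions~\ref{ass: gap} and~\ref{ass: unitary} are available; state-independence of $\langle\Psi,T_-\Psi\rangle$ is precisely the content of the additional topological-order Assumption~\ref{ass: toporder}, which enters only in Corollary~\ref{cor: top order}, and it is false in general: in the symmetry-breaking situation of Section~\ref{sec:sectors} different sectors carry different transported charge, yet the theorem still asserts $\Tr(PT_-)\in\fatlattice$. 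For the same reason the fallback reduction to a nondegenerate sector by a small splitting perturbation (which you yourself flag as problematic) cannot be carried out under the stated hypotheses: nothing guarantees a splitting that preserves $[U,P]=\vsmall$, charge conservation and locality, nor that $\Tr(PT_-)$ is stable along it, so the $p=1$ result of~\cite{BBDF} cannot be invoked as a black box. There is also a mismatch of targets: the theorem asserts that the trace $\Tr(PT_-)$ itself is $\vsmall$-close to an integer, while your opening line aims at $\Tr(PT_-)=\tfrac1p k+\vsmall$, and the stacking bookkeeping is off as well, since $\Tr\bigl(P^{\otimes p}\sum_j T_-^{(j)}\bigr)=p^{p}\,\Tr(PT_-)$ once $\Tr P=p$ is taken into account.

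What is missing is the mechanism that actually produces the integer without any indistinguishability assumption. The paper dresses the half-space charge with the quasi-adiabatic generator~(\ref{HastingsGenerator}) to obtain $\overline Q=Q-K_--K_+$ with $[\overline Q,P]=0$ exactly, factorizes $Z(\phi)=U^*\ep{\iu\phi\overline Q}U\ep{-\iu\phi\overline Q}\eqL Z_-(\phi)Z_+(\phi)$ into pieces localized at the two far-apart boundaries, and uses the finite-volume clustering of Proposition~\ref{prop:clustering} through Lemma~\ref{lemma: V- Invariance} to conclude $[P,Z_\pm(\phi)]\eqL 0$; solving a differential equation for $PZ_-(\phi)P$ and evaluating $\mathrm{det}_P$ at $\phi=2\pi$ (using integrality of the spectrum of the bare charge and the normalization~(\ref{choice of T-})) yields $1\eqL\ep{2\pi\iu\Tr(PT_-)}$. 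Your ``winding of $s\mapsto\ep{2\pi\iu s(Q+T_-)}$'' gestures toward such a determinant/winding identity, but without the dressed charge commuting exactly with $P$ and without a lemma of the type of Lemma~\ref{lemma: V- Invariance} — which is where the gap enters and where the degeneracy $p>1$ is handled with no topological order at all — there is no argument forcing the resulting phase to be trivial, i.e.\ no source of integrality, and the degenerate case is precisely the one that requires the trace-over-$\ran(P)$ (determinant) construction rather than a per-state statement.
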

While the trace is an integer, the physically relevant quantity is the expectation value of charge transport in the state given by the density matrix $p^{-1}P$, which makes the above into a rational index indeed. We denote it by
$$\mathrm{Ind}_P(U):=\langle T(U)_- \rangle_{P},$$
where $\langle A \rangle_P := \fraction^{-1}\Tr(PA)$, further emphasizing the mathematical fact that it is a general index associated with the pair $(U,P)$ of a locality-preserving unitary and a finite-dimensional projection that commutes with the unitary.

There are two natural settings where $p=\mathrm{rk}(P)>1$: topologically ordered ground states and spontaneous symmetry breaking with a local order parameter. In both cases, a value of the index can be attributed to the individual ground states themselves. We cover here the case of topological order and we postpone symmetry breaking to Section \ref{sec:sectors}.
 \begin{assumption}[Topological order] \label{ass: toporder}
For any $Z$ such that $\diam(Z)<C$, and for any operator $A\in\caA_Z$ of norm $1$,
 \begin{equation*}
PAP - \langle A \rangle_P P = \vsmall.
\end{equation*}
\end{assumption}
This assumption prevents local order since the restriction of any local observable to the ground state space is trivial. No local observable can be used to distinguish between the different states in the range of $P$.
We note that this assumption implies that the splitting $\Delta$ in Assumption~\ref{ass: gap} vanishes in the infinite volume limit because $\langle \Omega, H \Omega \rangle= \langle H \rangle_P+\error$ by topological order for any normalized $\Omega \in \ran(P)$. 
\begin{cor}\label{cor: top order}
If, in addition to Assumptions \ref{ass: gap} and \ref{ass: unitary}, Assumption~\ref{ass: toporder} also holds, then
\begin{equation*}
\fraction \langle \Psi,  T_- \Psi \rangle \in \fatlattice
\end{equation*}
for any normalized $\Psi \in \ran(P)$. 
\end{cor}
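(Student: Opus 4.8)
The plan is to reduce everything to Theorem~\ref{thm:Index} via the claim that, under topological order, $T_-$ acts trivially on the ground state space:
\[
PT_-P=\langle T_-\rangle_P\,P+\error .
\]
Granting this, sandwiching between a normalized $\Psi\in\ran(P)$ gives $\langle\Psi,T_-\Psi\rangle=\langle T_-\rangle_P+\error$, and multiplying by $\fraction$ yields $\fraction\langle\Psi,T_-\Psi\rangle=\Tr(PT_-)+\error$, which lies in $\fatlattice$ by Theorem~\ref{thm:Index} (using that $\fatlattice$ is stable under $\error$ and under translation by $\bbZ$). So the real content is to upgrade Assumption~\ref{ass: toporder} — which only constrains observables supported in sets of \emph{bounded} diameter — to the operator $T_-$, which a priori is only known to be almost supported in the \emph{macroscopic} boundary component $\partial_-$ and to obey the crude bound $\Vert T_-\Vert\le C|\totalgraph|$.

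To bridge that gap I would exhibit $T_-$, up to $\error$ and a harmless additive multiple of the identity, as a sum of \emph{polynomially many} almost-local operators of bounded diameter and uniformly bounded norm. Recall from~\eqref{eq: splitting} that $T_-+T_+$ equals $\Delta Q:=U\str Q_{\halftorus}U-Q_{\halftorus}=\sum_{Z\subset\halftorus}r_Z$ with $r_Z:=U\str q_Z U-q_Z$, that locality preservation of $U$ makes each $r_Z$ almost supported in $Z$ with $\diam(Z)\le R_Q$ and $\Vert r_Z\Vert\le 2m_Q$, and that there are only $\mathrm{poly}(L)$ of them. Using~\eqref{Spatial structure}, choose a region $S$ with $\partial_-$ deep in its interior and $\dist(S,\partial_+)\gtrsim L$, and apply the conditional expectation $\mathbb{E}_S$ (normalized partial trace over $\totalgraph\setminus S$). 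On one hand, since $T_-\in\caA_{\partial_-}$ sits well inside $S$, $T_+\in\caA_{\partial_+}$ is far from $S$, and both have polynomial norm, one gets $\mathbb{E}_S(\Delta Q)=T_-+\kappa\,\idtyty+\error$ with $\kappa=\tau(T_+)$, $\tau$ the normalized trace. On the other hand $\mathbb{E}_S(\Delta Q)=\sum_Z\mathbb{E}_S(r_Z)$, and for $Z$ far from $S$ one has $\mathbb{E}_S(r_Z)=\tau(r_Z)\,\idtyty+\error=\error$ because $\tau(r_Z)=\tau(U\str q_ZU)-\tau(q_Z)=0$ by cyclicity of the trace. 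Only the $\mathrm{poly}(L)$ terms $\mathbb{E}_S(r_Z)$ with $Z$ meeting a bounded neighbourhood of $S$ remain, and each is again almost supported in the bounded-diameter set $Z$ with norm $\le 2m_Q$. Subtracting $\kappa\,\idtyty$ gives
\[
T_-=\sum_j t_j+\error,\qquad t_j\in\caA_{Z_j},\quad \diam(Z_j)\le R_Q,\quad \Vert t_j\Vert\le C,\quad \#\{j\}=\mathrm{poly}(L),
\]
one of the $t_j$ being the (harmless) multiple $-\kappa\,\idtyty$ of the identity.

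The remaining step is routine: Assumption~\ref{ass: toporder} applied to $t_j/\Vert t_j\Vert$ gives $Pt_jP=\langle t_j\rangle_P\,P+\error$ (the identity term obeys this exactly), and summing over the $\mathrm{poly}(L)$ indices and absorbing the polynomial prefactor into $\error$ yields $PT_-P=cP+\error$ with $c=\sum_j\langle t_j\rangle_P=\langle T_-\rangle_P+\error$, completing the argument as in the first paragraph.

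I expect the decomposition in the second paragraph to be the main obstacle: one cannot apply topological order directly to $T_-$ — it is only quasi-local around a macroscopic set and only controlled polynomially in norm — so it must first be resolved into polynomially many genuinely bounded-diameter, uniformly-bounded almost-local pieces. The three ingredients that make this possible are the representation $\Delta Q=\sum_Z r_Z$ with bounded summands (from locality preservation of $U$), the tracelessness $\tau(r_Z)=0$ that kills the bulk under $\mathbb{E}_S$, and the macroscopic separation $\dist(\partial_-,\partial_+)\ge cL$ of~\eqref{Spatial structure} that isolates $T_-$ from $T_+$; note that without the conditional expectation one only learns that $T_-+T_+$ acts trivially on $\ran(P)$, which is insufficient.
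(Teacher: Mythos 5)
Your proposal is correct and takes essentially the same route as the paper: the paper's proof is exactly the observation that, by~(\ref{eq: splitting}) and the locality preservation of $U$, $T_-$ can be approximated by a sum of bounded-diameter (almost-)local terms to which Assumption~\ref{ass: toporder} applies termwise, so that $PT_-P \eqL \langle T_-\rangle_P P$ and the claim follows from Theorem~\ref{thm:Index}; your conditional-expectation construction merely makes that approximation step explicit. The one cosmetic point is that the retained terms should be those with $\dist(Z,S)$ up to order $L$ rather than a bounded neighbourhood of $S$ (otherwise the discarded tail, summed over polynomially many $Z$, is not $\error$), which changes nothing downstream since Assumption~\ref{ass: toporder} applies to each retained term wherever it is located.
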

\begin{proof} By (\ref{eq: splitting}) and the locality-preserving property of $U$, $T_-$ can be approximated by sums of local terms, to which Assumption \ref{ass: toporder} applies.
\end{proof}

We now proceed with the proof of the theorem, postponing discussions of further properties of the index and of applications.

\section{Proof of Theorem~\ref{thm:Index}}\label{Sec: Main}
 From now on, we denote by $\eqL$ equality up  to terms of $\vsmall$, in operator norm. 
\subsection{Preliminaries}\label{subsec: prelim to proof}
 Recalling that $Q$ is the charge in the half space, let us define 
\begin{equation}
\label{HastingsGenerator}
K:= \int W(t)   \ep{\iu t H} \iu[H,Q]  \ep{-\iu t H} \,dt
\end{equation}
with $W$ a real-valued, bounded, integrable function satisfying $W(t)=\caO(|t|^{-\infty})$ and $\widehat{W}(\omega)=-\frac{1}{\iu\omega}$ for all $|\omega|\geq \gamma$, with $\gamma$ the spectral gap as in Assumption~\ref{ass: gap}. Since, by functional calculus for $\mathrm{ad}_H = [H,\cdot]$,
\begin{equation*}
K = \widehat W(-\mathrm{ad}_H)(\iu\,\mathrm{ad}_H(Q)),
\end{equation*}
the properties of $W$ yield that $[K,P]=[Q,P]$.  By charge conservation~(\ref{H charge conservation}) and the spatial structure~(\ref{Spatial structure}), we see that
\begin{equation*}
\iu [H,Q]=J_-+J_+, \qquad J_{\pm}\in\caA_{\partial_{\pm}}.
\end{equation*}
Plugging this decomposition into~(\ref{HastingsGenerator}) and by the Lieb-Robinson bound, we conclude that there are $K_{\pm} \in \caA_{\partial_\pm}$ such that $\Vert K_\pm\Vert < C\vert{\totalgraph}\vert$ and such that
\begin{equation}\label{Qbar}
\overline Q : =Q-K_--K_+
\end{equation}
leaves $\ran(P)$ invariant:
\begin{equation}\label{QbarP}
[\overline Q,P]=0.
\end{equation}
Note that while~(\ref{HastingsGenerator}) is also the generator of the `quasi-adiabatic' flow, see~\cite{HastingsWen, Sven}, its use in the present context was introduced in~\cite{BBDF}.

We now present three lemmas before heading to the main argument. While the first and third ones are general and purely technical, the second one refers explicitly to the spatial structure of the problem, see Section~\ref{sec: spatial}, and plays an essential role in the following.

We note that Lemma~\ref{lemma: PUP vs commutator} is a slight modification from the published version of the present article, which is incorrect there. In consequence, there is a small change also in the proof of Lemma~\ref{lemma: V- Invariance}, the statement of the lemma remains unchanged. We thank Filippo Santi for pointing out the error.

\begin{lemma}\label{lemma: PUP vs commutator}
Let $V$ be a unitary and $P$ an orthogonal projection. Then
\begin{enumerate}
\item $\Vert [V,P]\Vert <\epsilon$ implies $\Vert PV\psi\Vert^2 >1-\epsilon$ and $\Vert PV\str\psi\Vert^2 >1-\epsilon$ for all normalized vectors such that $P\psi = \psi$.
\item $\Vert PV\psi\Vert^2 >(1-\epsilon^2) \Vert P \psi \Vert^2$ and $\Vert PV\str\psi\Vert >(1-\epsilon^2) \Vert P \psi \Vert^2$ implies that $\Vert [V,P]\Vert < 2\epsilon$.
\end{enumerate}
\end{lemma}
\begin{proof}
(i) Let $\psi = P\psi,\Vert \psi \Vert = 1$. Then,
\begin{equation*}
\Vert PV\psi\Vert^2 = \langle \psi,P\psi\rangle + \langle \psi,V\str [P,V]\psi\rangle > 1-\epsilon.
\end{equation*}
The inequality for $PV\str\psi$ follows as above with $V\str [P,V]$ replaced by $[V,P]V\str$.

(ii) For an arbitrary $\psi$,
 \begin{align*} 
 \Vert P \psi \Vert^2 = \Vert V P \psi \Vert^2 &= \Vert (1-P) V P \psi \Vert^2 + \Vert P V P \psi \Vert^2 \\
 											    & > \Vert (1-P) V P \psi \Vert^2 + (1- \epsilon^2) \Vert P \psi \Vert^2.
 \end{align*}
 This implies $\|(1-P) V P \| < \epsilon$. Exchanging $V \leftrightarrow V\str $, we also get $\|(1-P) V\str P \| < \epsilon$, and hence $\| P V (1-P) \| < \epsilon$. The statement then follows from $[V,P] = (1-P)VP - PV(1-P)$.
\end{proof}
\begin{lemma}\label{lemma: V- Invariance}
Let $V_\pm\in\caA_{\partial_\pm}$ be unitary operators and let $V := V_-V_+$. If Assumption~\ref{ass: gap} holds, then $[V,P] \eqL 0$ implies $[V_\pm,P] \eqL 0$.
\end{lemma}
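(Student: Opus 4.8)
The plan is to deduce the invariance of $\ran(P)$ under each factor $V_\pm$ separately from the invariance under the product $V = V_- V_+$, using the spatial separation $\dist(\partial_-, \partial_+) \geq cL$ together with exponential clustering (Proposition~\ref{prop:clustering}). First I would use Lemma~\ref{lemma: PUP vs commutator}(i): since $[V,P] \eqL 0$, we have $\Vert PVP \Vert^2 \geq 1 - \vsmall$, i.e. $\Vert P V_- V_+ P \Vert$ is $1 - \vsmall$. The goal is to turn this into $\Vert P V_\pm P \Vert = 1 - \vsmall$, from which Lemma~\ref{lemma: PUP vs commutator}(ii) gives $[V_\pm, P] \eqL 0$.

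The key step is a factorization estimate. For normalized $\Omega \in \ran(P)$, consider $\langle \Omega, V_- V_+ \Omega \rangle$. I would like to insert $P$ between $V_-$ and $V_+$; since $V_-\in\caA_{\partial_-}$ and $V_+\in\caA_{\partial_+}$ are almost supported in sets at distance $\geq cL$, clustering applied to $A = V_-^*$ (acting to the left) and $B = V_+$ — or more precisely a version of clustering adapted to the ground-state patch rather than a single vector — should give $\langle \Omega, V_- V_+ \Omega\rangle = \langle \Omega, V_- P V_+ \Omega\rangle + \vsmall$. One has to be slightly careful: Proposition~\ref{prop:clustering} as stated uses strictly disjoint supports and a single $\Omega$, whereas here the operators are only almost supported and we will need the estimate to see the $P$ insertion survive; but the superpolynomial decay in $\dist(\partial_-,\partial_+) \geq cL$ beats the $\vert\totalgraph\vert$-type prefactors, so the net error is still $\vsmall$. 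Having inserted the projector, $|\langle \Omega, V_- P V_+ \Omega\rangle| \leq \Vert P V_-^* \Omega\Vert \, \Vert P V_+ \Omega\Vert \leq \Vert P V_-^* P\Vert \cdot 1$ after also using that $V_+$ nearly preserves... wait — we do not yet know $V_+$ preserves $\ran P$; instead bound $\Vert P V_+\Omega \Vert \leq 1$ and $\Vert P V_-^*\Omega\Vert = \Vert P V_-^* P\Vert$-ish. Choosing $\Omega$ to nearly realize $\Vert P V_- V_+ P\Vert$, the chain $1 - \vsmall = |\langle \Omega, V_-V_+\Omega\rangle| \leq \Vert P V_-^* P\Vert \cdot \Vert P V_+ P\Vert + \vsmall \leq 1$ forces both factors to be $1 - \vsmall$. (Replacing $\Omega$ by $V_+\Omega$ projected, or arguing on the optimal vectors for each factor, handles the bookkeeping.)

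The main obstacle I anticipate is precisely this bookkeeping: Proposition~\ref{prop:clustering} is stated for operators with strictly disjoint supports and decay controlled by the true support-distance, while $V_\pm$ are only \emph{almost} supported in $\partial_\pm$ and carry norm bounds growing polynomially in $\vert\totalgraph\vert$ (hence in $L$). So I must first truncate $V_\pm$ to operators $V_{\pm,r}$ strictly supported in $(\partial_\pm)_{(r)}$ with $r$ a small power of $L$ — say $r = cL/3$ — so that the truncated supports remain disjoint and at distance $\gtrsim L$, apply clustering there, and check that the truncation errors $\Vert V_\pm - V_{\pm,r}\Vert = \vert\totalgraph\vert\,\caO(r^{-\infty})$ and the clustering error $\Vert V_{-,r}\Vert\Vert V_{+,r}\Vert \vert (\partial_-)_{(r)}\vert \vert(\partial_+)_{(r)}\vert \,\caO(r^{-\infty})$ are all still $\vsmall$; this works because a fixed negative power of $L$ times $\caO(L^{-\infty})$ is $\caO(L^{-\infty})$. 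A secondary point is that clustering is stated for a single normalized vector $\Omega \in \ran P$ rather than with the projector $P$ sandwiched, but taking $\Omega$ to be (close to) a maximizing vector for $\Vert P V_- V_+ P\Vert$ and inserting $P = \sum_j |\Omega_j\rangle\langle\Omega_j|$ over an orthonormal basis of $\ran P$, with $p = \rk(P)$ fixed and $L$-independent, reduces everything to finitely many applications of Proposition~\ref{prop:clustering}. Once these technical reductions are in place the inequality chain above closes the proof, and by symmetry the same argument run with the roles of $V_-$ and $V_+$ swapped (or simply noting $[V_+,P] \eqL [V,P] \ominus [V_-,P] \eqL 0$) gives invariance under $V_+$ as well.
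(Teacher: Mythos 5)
Your proposal is correct and follows essentially the same route as the paper: Lemma~\ref{lemma: PUP vs commutator}(i) to get $\Vert PVP\Vert\eqL 1$, clustering to insert $P$ between $V_-$ and $V_+$, the submultiplicative bound on $\Vert PV_-PV_+P\Vert$ to force $\Vert PV_\pm P\Vert\eqL 1$, and Lemma~\ref{lemma: PUP vs commutator}(ii) to conclude. The only differences are cosmetic: you extract both factors at once where the paper bounds $\Vert PV_-PV_+P\Vert\leq\Vert PV_-P\Vert$ and treats $V_+$ by symmetry, and you spell out the truncation of almost-local operators and the vector-to-operator reduction of Proposition~\ref{prop:clustering}, steps the paper leaves implicit.
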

\begin{proof}
By Proposition~\ref{prop:clustering}, using the notation of the proposition,
\begin{equation}\label{clustering}
\Vert PAB\Omega - PAPB\Omega\Vert = \caO(d(X,Y)^{-\infty}),
\end{equation}
holds for any normalized vector $\Omega$ in the range of $P$.

Now, the assumption $[V,P] \eqL 0$ and Lemma~\ref{lemma: PUP vs commutator}(i) imply that $\Vert PV_+V_-\Omega\Vert^2\eqL 1$ for any normalized ground state vector $\Omega$, and we conclude by clustering~(\ref{clustering}) that
\begin{equation*}
1\geq \Vert P V_-\Omega\Vert
\geq \Vert P V_+P \Vert\Vert P V_-\Omega\Vert\geq \Vert PV_+P V_-\Omega\Vert \eqL 1.
\end{equation*}
Hence $\Vert P V_-\Omega\Vert\eqL 1$. The same argument but starting with the commutator $[P,V\str]$ yields the bound $\Vert P V_-\str\Omega\Vert \eqL 1$. Therefore, $\Vert [V_-,P]\Vert \eqL 0$ by Lemma~\ref{lemma: PUP vs commutator}(ii).
\end{proof}
\begin{lemma}\label{lemma: exp of op}
Let $A = A\str$ and $P$ be an orthogonal projection. If $\Vert [A,P]\Vert<\epsilon$, then 
\begin{equation*}
\Vert [\ep{-\iu\phi A},P]\Vert\leq \phi \epsilon.
\end{equation*}
\begin{proof}
Follows immediately from the identity
\begin{equation*}
[\ep{-\iu\phi A},P]  = -\iu\int_0^\phi \ep{-\iu s A}[A,P] \ep{-\iu (\phi-s)A}\dd s
\end{equation*}
and the unitarity of the exponentials.
\end{proof}
\end{lemma}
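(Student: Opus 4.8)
The plan is to write the difference $\ep{-\iu\phi A}P - P\ep{-\iu\phi A}$ as an integral of conjugates of the commutator $[A,P]$, so that unitarity of the exponential factors lets me pull the norm inside the integral without any growth. Concretely, I would consider the operator-valued function $F(s) := \ep{-\iu s A}\,P\,\ep{\iu s A}$ on the interval $[0,\phi]$. Then $F(0) = P$ and $F'(s) = \ep{-\iu s A}(-\iu[A,P])\ep{\iu s A}$, by the product rule and the fact that $\frac{d}{ds}\ep{\pm\iu s A} = \pm\iu A\,\ep{\pm\iu s A} = \pm\iu\,\ep{\pm\iu s A}A$ (here $A$ commutes with its own exponential, which is what makes both forms of the derivative available and the product rule clean).

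The second step is to integrate: $F(\phi) - F(0) = -\iu\int_0^\phi \ep{-\iu s A}[A,P]\ep{\iu s A}\dd s$, and then multiply on the right by $\ep{-\iu\phi A}$, which is where the stated identity $[\ep{-\iu\phi A},P] = -\iu\int_0^\phi \ep{-\iu s A}[A,P]\ep{-\iu(\phi-s)A}\dd s$ comes from (the two exponential factors on the right of $[A,P]$ recombine as $\ep{\iu s A}\ep{-\iu\phi A} = \ep{-\iu(\phi-s)A}$, again using that powers of $A$ commute). All of this uses only that $A = A\str$, which guarantees $\ep{\pm\iu s A}$ is unitary.

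The final step is the norm estimate: take operator norms of both sides, move the norm inside the integral by the triangle inequality, and use that each integrand has norm $\Vert \ep{-\iu s A}[A,P]\ep{-\iu(\phi-s)A}\Vert = \Vert[A,P]\Vert < \epsilon$ since conjugation and multiplication by unitaries is norm-preserving. The integral over $[0,\phi]$ of a constant bound $\epsilon$ gives $\phi\epsilon$, yielding $\Vert[\ep{-\iu\phi A},P]\Vert \leq \phi\epsilon$.

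There is no serious obstacle here; the only point requiring a modicum of care is the differentiation of $F(s)$ and the algebraic bookkeeping of which exponential sits where, but since everything is built from a single self-adjoint $A$, all the exponentials commute with $A$ and with each other, so the manipulations are entirely routine. This is why the paper states the identity and leaves the rest as immediate.
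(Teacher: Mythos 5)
Your proposal is correct and is essentially the paper's own argument: you derive the stated Duhamel-type identity by differentiating $\ep{-\iu s A}P\ep{\iu s A}$ and integrating, then bound the integrand using unitarity of the exponentials, exactly as the paper intends. The only difference is that you spell out the derivation of the identity which the paper simply asserts; nothing further is needed.
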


\subsection{The main argument}\label{subsec: main}

We recall that the spatial setup of Section~\ref{sec: spatial}, see also Figure~\ref{fig:GammaTorus}, is such that the boundaries $\partial_{\pm}$ are separated by $cL$. Let now ${\totalgraph}_\pm=(\partial_{\pm})_{(cL)} \cap {\halftorus}$ be regions such that 
$\dist({\totalgraph}_{+},{\totalgraph}_-)\geq cL$ (recall that $c$ can change value from equation to equation). 
We denote
$$
Q_{\pm} :=Q_{{\totalgraph}_\pm},\qquad  Q_m := Q -Q_- -Q_+.
$$

For $\phi\in[0,2\pi]$, we set
\begin{equation}
Z(\phi) := U\str \ep{\iu\phi\overline Q}U\ep{-\iu\phi\overline Q} = \ep{\iu\phi \overline{Q}^U}\ep{-\iu\phi \overline{Q}},
\end{equation}
as well as
\begin{equation}\label{Def Zpm}
Z_\pm(\phi) := \ep{\iu\phi  {\overline Q^U_{\pm}} }\ep{-\iu\phi \overline Q_{\pm}}.
\end{equation}
Here we defined $\overline Q^U:=U\str \overline Q U$, $K_{\pm}^U:=U\str K_\pm U$, and
\begin{equation}\label{Qbars}
\overline Q_{\pm} := Q_\pm - K_\pm,\qquad
\overline Q^U_{\pm} := Q_\pm +T_\pm - K_\pm^U.
\end{equation}
To avoid later confusion, we point out that $\overline Q^U_{\pm} \neq U\str Q_- U$. With these definitions, the following identities hold: 
\begin{equation}\label{Splitting}
\overline Q = \overline Q_{-}+ Q_{m} + \overline Q_{+},\qquad  \overline  Q^U \eqL \overline Q^U_{-}+ Q_m+ \overline Q^U_{+}.
\end{equation}
The crucial point of these definitions is the commutation property 
\begin{equation}\label{commutations pm}
[Q_m,A_{\pm}]\eqL 0,\qquad [A_-,B_+]\eqL 0,
\end{equation}
with $A_{\pm}, B_\pm$ any of the above objects carrying the subscript $\pm$. This is immediate for operators in $\caA_{\partial_\pm}$ but it also holds for $Q_{\pm},\overline{Q}_\pm, \overline{Q}^U_\pm \in\caA_{{\totalgraph}_\pm}$, and hence for their exponentials by Lemma \ref{lemma: exp of op}, because these operators reduce to the charge away from $\partial_\pm$.       
This immediately leads to 
$$
Z(\phi) \eqL Z_-(\phi)Z_+(\phi).
$$

By~Assumption~\ref{ass: unitary} and by construction of $\overline Q$, see~(\ref{QbarP}), all four factors of~$Z(\phi) = U\str \ep{\iu\phi\overline Q}U\ep{-\iu\phi\overline Q}$ commute with $P$, so that 
\begin{equation*}
[P,Z(\phi)]\eqL 0.
\end{equation*}
This and Lemma~\ref{lemma: V- Invariance} now yield the first essential observation:
\begin{equation}\label{Z-P}
[P,Z_{\pm}(\phi)] \eqL 0.
\end{equation}
The second one follows by recalling the integrality of the spectrum of $Q_m$, which implies that
\begin{equation}\label{exp 2pi i Q-}
[P, \ep{2\pi\iu \overline{Q}_\pm }] \eqL 0,
\end{equation}
again by Lemma~\ref{lemma: V- Invariance} applied to $\ep{2\pi\iu\overline Q} = \ep{2\pi\iu\overline Q_-}\ep{2\pi\iu\overline Q_+}$, see~(\ref{Splitting}).

We now consider the function $\phi \mapsto \caZ_{-}(\phi):= PZ_-(\phi)P $ and let 
$D_- := \overline{Q}^U_{-}-\overline{Q}_{-}$. Then,
\begin{align*}
-\iu \frac{d}{d\phi}  \caZ_{-}(\phi) 
& = P \ep{\iu\phi  {\overline Q^U_{-}} } D_- \ep{-\iu\phi \overline Q_{-}}P \\
& =      PZ_-(\phi)\ep{\iu\phi \overline Q_{-}}   D_- \ep{-\iu\phi \overline Q_{-}} P  \\
& \eqL  \caZ_{-}(\phi) P \ep{\iu\phi \overline Q_{-}}   D_- \ep{-\iu\phi \overline Q_{-}} P  \\
& \eqL   \caZ_{-}(\phi) P \ep{\iu\phi \overline Q}   D_- \ep{-\iu\phi \overline Q} P \\
& \eqL  \caZ_{-}(\phi) \ep{\iu\phi \overline Q} P     D_- P \ep{-\iu\phi \overline Q}.
\end{align*}
The first two equalities are immediate calculations, the third one follows from~(\ref{Z-P}), the fourth one uses the commutations~(\ref{commutations pm}), the fifth one is by property~(\ref{QbarP}) of~$\overline Q$. The unique solution of this differential equation with $\caZ_-(0)=1$  is
\begin{equation}\label{veryfirst}
\caZ_-(\phi) \eqL \ep{\iu\phi(P     D_- P + \overline Q)}\ep{-\iu\phi\overline Q}.
\end{equation}
Note that both unitary factors on the right independently commute with $P$. 
It remains to study $\caZ_{-}(2\pi)$ to conclude. By the integrality of the spectrum of $Q_m,Q_+$,
\begin{equation*}
U\str \ep{2\pi\iu \overline{Q}_{-}}U 
 =  U\str \ep{2\pi\iu (\overline{Q}_{-} +Q_m+Q_+)}U
 = \ep{2\pi\iu (U\str QU - K^U_-)},
\end{equation*}
where we used the definition~(\ref{Qbars}) of~$\overline Q_-$. Since $U\str Q U \eqL Q + T_- + T_+$, the integrality of the spectrum of charge and the commutation property~(\ref{commutations pm}), we conclude that 
\begin{align}\label{at2pi}
U\str \ep{2\pi\iu \overline{Q}_{-}}U 
 & \eqL  \ep{2\pi\iu(Q_++T_+)} \ep{2\pi\iu(Q_-+T_- - K^U_-)}  \\
 & \eqL \ep{2\pi\iu(Q_-+T_- - K^U_-)},\nonumber
\end{align}
where the second equality uses the choice~(\ref{choice of T-}). Since the exponent is precisely 
$\overline{Q}_{-}^U$, see again~(\ref{Qbars}), 
we multiply from the right by~$\ep{-2\pi \iu \overline{Q}_{-}}$ to obtain
\begin{equation}
\label{eq: res as product}
  \caZ_-(2\pi) \eqL PU\str \ep{2\pi\iu \overline{Q}_{-}}U \ep{-2\pi \iu \overline{Q}_{-}}   P . 
  \end{equation}
All four unitaries on the right hand side commute with $P$, see~(\ref{exp 2pi i Q-}). Moreover, since $PV\str PVP - P = PV\str [P,V]P$ for any unitary $V$, the condition $[P,V]\eqL0$ implies that $PVP$ is invertible on $\ran(P)$ with $(PVP)^{-1} \eqL PV\str P$. By continuity of the determinant, we conclude that
\begin{equation*}
\mathrm{det}_{P}(\caZ_-(2\pi)) \eqL 1
\end{equation*}
where $\mathrm{det}_{P}(A) := \det (PAP +(1-P))$. On the other hand, (\ref{veryfirst}) and the  relation
$\mathrm{det}_{P}(\ep{A})=\ep{\Tr(PA)}$ yield
\begin{equation*}
\mathrm{det}_{P}(\caZ_-(2\pi)) \eqL \ep{2\pi\iu \Tr (PD_-P + \overline Q)}\ep{-2\pi\iu \Tr (\overline Q)} = \ep{2\pi\iu \Tr (PD_-)}.
\end{equation*}
We further observe that 
\begin{equation*}
D_- = T_- - U\str K_- U + K_-,
\end{equation*}
see~(\ref{Qbars}), so that $\Tr(PD_-)\eqL \Tr(PT_-)$ by the unitary invariance of the trace. Hence $\Tr(PT_-) \in \fatlattice$, which concludes the proof of Theorem \ref{thm:Index}.  
\begin{rem}\label{rem: how to check convention}
Had we not enforced the choice~(\ref{choice of T-}) of $T_\pm$, (\ref{eq: res as product}) would read
\begin{equation}\label{Z- at 2pi without}
\caZ_-(2\pi) \eqL  P\ep{-2\pi \iu  (Q_+ + T_+)} U^*\ep{2\pi \iu  \overline Q_-}U \ep{-2\pi \iu  \overline Q_-}P,
\end{equation}
see~(\ref{at2pi}). From (\ref{eq:nu}) 
we know that $\ep{-2\pi\iu (Q_+ + T_+)} \eqL \ep{2\pi\iu (Q_- + T_-)}$ and that they are multiples of the identity. Taking the determinant of~(\ref{Z- at 2pi without}), we conclude that
\begin{equation}\label{eq:nu_not_zero}
\ep{2\pi \iu \fraction \langle T_-\rangle_P} \eqL \ep{2\pi\iu \fraction(Q_- + T_-)}.
\end{equation}
This means that, to check that $T_\pm$ satisfies (\ref{eq: res as product}) it suffices to verify that $\fraction \langle T_-\rangle_P \in \fatlattice$.  We will use this in the proof of additivity below.
\end{rem}

\section{Additivity, filling, and Avron-Dana-Zak relations} \label{Sec:ADZ}

\subsection{Additivity of the index} If $U_1,U_2$ both satisfy Assumption~\ref{ass: unitary}, then so does their product, by Leibniz' rule for the commutators and preservation of locality.
 It is then natural to expect that the charge transported by the composed action of $U_1,U_2$ is equal to the sum of the charges transported by the action of each of them individually, see~\cite{ASSIndex} for the non-interacting case. This is indeed true if we make the choice
\begin{equation}\label{restriction of product}
T(U_2 U_1)_- :=U_2^*T(U_1)_{-} U_2 + T(U_2)_{-}.
\end{equation}
This is the content of the following proposition.

\begin{prop} \label{Prop:Additivity}
Suppose that $U_1,U_2$ both satisfy Assumption~\ref{ass: unitary}. Let $T(U_2 U_1)_-$ be defined by~(\ref{restriction of product}). Then~(\ref{eq: splitting}) and (\ref{choice of T-}) hold for $T(U_2 U_1)_-$ and 
\begin{equation}\label{2-additivity}
\mathrm{Ind}_P(U_2 U_1) \eqL \mathrm{Ind}_P(U_2) + \mathrm{Ind}_P(U_1).
\end{equation}
\end{prop}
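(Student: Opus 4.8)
The plan is to verify the two bookkeeping properties first, and then establish the additivity formula~(\ref{2-additivity}) by a direct computation that reduces the composite index to the two individual ones via Remark~\ref{rem: how to check convention}. First I would check that $T(U_2U_1)_-$ as defined in~(\ref{restriction of product}) indeed satisfies~(\ref{eq: splitting}). Since $(U_2U_1)^* Q (U_2U_1) - Q = U_2^*(U_1^*QU_1 - Q)U_2 + (U_2^*QU_2 - Q)$, and since by Assumption~\ref{ass: unitary} (applied to each $U_i$) we have $U_i^* Q U_i - Q = T(U_i)_- + T(U_i)_+$ with $T(U_i)_\pm \in \caA_{\partial_\pm}$, the locality-preserving property of $U_2$ gives $U_2^* T(U_1)_\pm U_2 \in \caA_{\partial_\pm}$ (up to $\vsmall$). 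Hence $(U_2U_1)^*Q(U_2U_1) - Q \eqL T(U_2U_1)_- + T(U_2U_1)_+$ with $T(U_2U_1)_\pm := U_2^* T(U_1)_\pm U_2 + T(U_2)_\pm \in \caA_{\partial_\pm}$, which is~(\ref{eq: splitting}).

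Next I would verify the normalization~(\ref{choice of T-}), i.e.\ $\ep{2\pi\iu(Q + T(U_2U_1)_-)} \eqL 1$. Here I would use the fact that both $T(U_1)_-$ and $T(U_2)_-$ satisfy~(\ref{choice of T-}), so $\ep{2\pi\iu(Q+T(U_1)_-)} \eqL 1$ and $\ep{2\pi\iu(Q+T(U_2)_-)}\eqL 1$, together with integrality of the spectrum of $Q$. Conjugating the first by $U_2$ gives $\ep{2\pi\iu(U_2^*QU_2 + U_2^*T(U_1)_- U_2)}\eqL 1$; combining with $U_2^*QU_2 \eqL Q + T(U_2)_- + T(U_2)_+$ and the fact that the $\partial_-$ and $\partial_+$ pieces commute up to $\vsmall$ (as in~(\ref{eq:nu})), one peels off the $\partial_+$ part and is left with $\ep{2\pi\iu(Q + T(U_2)_- + U_2^*T(U_1)_- U_2)}\eqL \ep{2\pi\iu(Q + T(U_2)_-)}^{\pm 1}\cdot(\cdots) \eqL 1$, which is exactly $\ep{2\pi\iu(Q+T(U_2U_1)_-)}\eqL 1$. (This is the step where one must be a little careful about splitting exponentials of sums of almost-commuting operators, invoking Lemma~\ref{lemma: exp of op} and clustering as needed.)

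For the additivity identity itself, by Remark~\ref{rem: how to check convention} it suffices to compute $\Tr(P\,T(U_2U_1)_-)$ modulo the conventions, and in fact $\mathrm{Ind}_P(U_2U_1) = p^{-1}\Tr(P\,T(U_2U_1)_-)$. Using~(\ref{restriction of product}) and linearity of the trace, $\Tr(P\,T(U_2U_1)_-) = \Tr(P\,U_2^* T(U_1)_- U_2) + \Tr(P\,T(U_2)_-)$. The second term is $p\,\mathrm{Ind}_P(U_2)$. For the first term, I would insert $\idtyty = P + (1-P)$ and use $[U_2,P]\eqL 0$: we have $P U_2^* T(U_1)_- U_2 = P U_2^* P\, T(U_1)_- \,P U_2 P + (\text{error } \eqL 0)$ because $U_2^* T(U_1)_- U_2$ differs from $PU_2^*T(U_1)_- U_2 P$ only through commutators $[P,U_2]$, and then by cyclicity of the trace and $U_2 P U_2^* \eqL P$ (using Lemma~\ref{lemma: PUP vs commutator} to control that $PU_2P$ is unitary on $\ran P$ up to $\vsmall$), $\Tr(PU_2^* T(U_1)_- U_2) \eqL \Tr(U_2 P U_2^*\, T(U_1)_-) \eqL \Tr(P\, T(U_1)_-) = p\,\mathrm{Ind}_P(U_1)$. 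Dividing by $p$ yields~(\ref{2-additivity}).

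The main obstacle I anticipate is not any single deep step but the careful tracking of the $\vsmall$ errors through the conjugation $T(U_1)_- \mapsto U_2^* T(U_1)_- U_2$: one needs that $U_2$, being only \emph{locality-preserving} rather than strictly local, maps $\caA_{\partial_-}$ into $\caA_{\partial_-}$ only up to tails, and that these tails remain $\vsmall$ after the various manipulations (exponentiation, taking determinants on $\ran P$, cyclic permutations under the trace with operator norms that grow like $\vert{\totalgraph}\vert$). The trace estimates in particular require that $\Tr(PA)$ for $A\eqL 0$ be itself $\vsmall$, which holds because $\Tr(P\,\cdot)$ is bounded by $p\Vert\cdot\Vert$ with $p$ independent of $L$. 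Establishing~(\ref{choice of T-}) for the product — the exponential-splitting step — is the place where I would be most careful to write things in the right order so that only almost-commuting pairs are ever separated.
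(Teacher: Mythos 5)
Your proposal is correct, and two of its three components coincide with the paper's proof: the verification of~(\ref{eq: splitting}) via $(U_2U_1)^*Q(U_2U_1)-Q$ and locality-preservation of $U_2$, and the trace identity $\Tr(P\,T(U_2U_1)_-)\eqL\Tr(P\,T(U_1)_-)+\Tr(P\,T(U_2)_-)$ obtained from $[U_2,P]\eqL 0$, cyclicity, and the fact that $\Tr(P\,\cdot)$ only costs a factor $\fraction$ against $\vsmall$ operator-norm errors. Where you genuinely diverge is the normalization~(\ref{choice of T-}): the paper does \emph{not} check it directly, but instead notes that each $\Tr(P\,T(U_i)_-)\in\fatlattice$ by Theorem~\ref{thm:Index}, hence so is their sum, and then invokes Remark~\ref{rem: how to check convention} to conclude that $T(U_2U_1)_-$ is correctly normalized (in the sense that~(\ref{eq: res as product}) holds), after which~(\ref{2-additivity}) is exactly the trace identity. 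Your route—conjugating $\ep{2\pi\iu(Q+T(U_1)_-)}\eqL 1$ by $U_2$, substituting $U_2^*QU_2=Q+T(U_2)_-+T(U_2)_+$, and peeling off the $\partial_+$ piece—does go through, but the peeling step should be stated more carefully than in your sketch: one splits $Q$ into charges near $\partial_-$, near $\partial_+$, and a middle piece with $\ep{2\pi\iu Q_m}=1$ by integrality, separates the exponential using a Duhamel bound for almost-commuting exponents (the mechanism of~(\ref{eq:nu}) and~(\ref{commutations pm}), not clustering, which plays no role here), and then removes the factor $\ep{2\pi\iu(Q+T(U_2)_+)}\eqL 1$ using the $+$ half of the convention~(\ref{choice of T-}) for $U_2$—your displayed formula at that point is garbled but the idea is right. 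The trade-off: the paper's argument is shorter because it reuses Theorem~\ref{thm:Index} and the determinant bookkeeping of Remark~\ref{rem: how to check convention}; yours establishes the literal statement $\ep{2\pi\iu(Q+T(U_2U_1)_-)}\eqL 1$ without appealing to the index theorem for $U_1,U_2$ at that stage, at the cost of repeating the localization/integrality estimates from Section~\ref{Sec: Main}.
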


\begin{proof}
The definitions of or $T(U_1)_\pm$ and $T(U_2)_\pm$ yield
\begin{equation*}
U_2\str U_1\str Q U_1 U_2 - Q = U_2^* (T(U_1)_- + T(U_1)_+) U_2 + T(U_2)_- + T(U_2)_+.
\end{equation*}
Since $U_2$ is locality-preserving, the choice~(\ref{restriction of product}) indeed satisfies~(\ref{eq: splitting}).
Using $[U_2,P] \eqL 0$, we get
\begin{equation}\label{Tr-additivity}
\Tr (P T(U_2 U_1)_- ) \eqL \Tr (P T(U_1)_{-} ) + \Tr (P T(U_2)_{-} ) \in \fatlattice.
\end{equation}
By Remark \ref{rem: how to check convention}, this shows that $T(U_2U_1)_-$ indeed satisfies~(\ref{choice of T-}). The additivity \ref{2-additivity} is then precisely (\ref{Tr-additivity}).
\end{proof}

\subsection{The fractional Lieb-Schultz-Mattis theorem}\label{sec: flsm}
Here, we constrain the choice of graph $\totalgraph$ to have a good notion of translation. Let ${\totalgraph}'$ be a $d-1$ dimensional graph in the sense of Section~\ref{sec: spatial}. Let ${\mathbb{T}} := \bbZ/L_1\bbZ$ be a discrete circle. Then ${\totalgraph}$ is of the form
\begin{equation*}
{\totalgraph} := {\mathbb{T}}\,\Box\, {\totalgraph}'
\end{equation*}
the cartesian product of these graphs.
We write $(x_1,x') \in \totalgraph$ with $x_1 \in {\mathbb{T}}, x'\in\Lambda'$ and we let  $\T$ be a unitary shift along ${\mathbb{T}}$.  Finally, let 
$${\halftorus}= \{ (x_1,x'):  0\leq x_1 < L_1/2, x' \in \Lambda'\}.$$
This choice is consistent with the setup of Section \ref{sec: spatial} provided we let $L_1 \geq cL$, with $L$ the diameter of $\totalgraph$.  We assume that the Hamiltonian is translation invariant, i.e.\ 
$[H,\T]=0$.  Then so is its ground state space, namely $[\T,P] = 0$, and since translation clearly preserves locality, Assumption~\ref{ass: unitary} holds for $\T$.  Moreover
\begin{equation}\label{translation}
\T\str Q \T -Q = - Q_{[0]} + Q_{[\lceil L_1/2\rceil]},
\end{equation}
where $[x_1] = \{ (x_1, x'), x' \in \Lambda'\}$ and $\lceil L_1/2\rceil$ is the smallest integer not smaller than $L_1/2$. We make the natural choice $T_- = - Q_{[0]}$ for which (\ref{choice of T-}) holds. By translation invariance, the total charge per slab ${\Lambda}'$ is
\begin{equation*}
\langle Q_{[0]}\rangle_P  = \tfrac{1}{L_1} \langle Q_{\Lambda} \rangle_P.
\end{equation*}
We then obtain the following fractional Lieb-Schultz-Mattis (LSM) theorem.
\begin{cor}
\label{cor:LSM}
If Assumptions \ref{ass: gap} holds, then 
\begin{equation*}
\fraction \langle Q_{[0]}\rangle_P \in \fatlattice.
\end{equation*}
\end{cor}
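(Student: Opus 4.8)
The plan is to read the statement off Theorem~\ref{thm:Index} applied to the unitary $U=\T$, so the work is entirely in checking hypotheses. First I would confirm that $\T$ satisfies Assumption~\ref{ass: unitary}. Translation invariance $[H,\T]=0$ (a standing assumption of this subsection) makes $\T$ commute with every spectral projection of $H$, so $[\T,P]=0$ exactly, a fortiori up to $\vsmall$. The shift is manifestly locality-preserving, since it only relabels the slabs $[x_1]$. And it conserves $Q_{\totalgraph}$: translation covariance of the charge density permutes the mutually commuting local charges $q_Z$ cyclically, whence $\T\str Q_{\totalgraph}\T=Q_{\totalgraph}$; this same covariance is exactly what yields~(\ref{translation}). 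Thus~(\ref{eq: splitting}) holds with the choice $T_-=-Q_{[0]}$ and $T_+=Q_{[\lceil L_1/2\rceil]}$, and the support conditions $T_\pm\in\caA_{\partial_\pm}$ hold because, with $\halftorus=\{0\le x_1<L_1/2,\ x'\in\Lambda'\}$, the slabs $[0]$ and $[\lceil L_1/2\rceil]$ sit in the two disjoint components of $\partial\halftorus$. The bound $\Vert T_\pm\Vert\le C\vert\totalgraph\vert$ follows from $\sup_Z\Vert q_Z\Vert\le m_Q$ together with the polynomial ball-growth of $\totalgraph$.

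The one point deserving a word is the normalization~(\ref{choice of T-}). Here I would observe that $Q+T_-=Q_{\halftorus}-Q_{[0]}=Q_{\halftorus\setminus[0]}$ and $Q+T_+=Q_{\halftorus}+Q_{[\lceil L_1/2\rceil]}=Q_{\halftorus\cup[\lceil L_1/2\rceil]}$ are each a sum of the mutually commuting charges $q_Z$, all with spectrum in $\bbZ$; hence $Q+T_\pm$ has integer spectrum and $\ep{2\pi\iu(Q+T_\pm)}=1$ on the nose, in particular up to $\vsmall$. (Alternatively one could invoke Remark~\ref{rem: how to check convention} and verify $\fraction\langle T_-\rangle_P\in\fatlattice$ only a posteriori, but the direct check is cleaner.)

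With the hypotheses in place, Theorem~\ref{thm:Index} gives $\Tr(PT_-)\in\fatlattice$, that is, $-\Tr(PQ_{[0]})\in\fatlattice$. Since $\Tr(PQ_{[0]})=\fraction\langle Q_{[0]}\rangle_P$ by definition of $\langle\cdot\rangle_P$, and since $\fatlattice$ is invariant under negation, this is equivalent to $\fraction\langle Q_{[0]}\rangle_P\in\fatlattice$, the desired conclusion. I do not anticipate a genuine obstacle here: all the analytic content is already packaged in Theorem~\ref{thm:Index}, and the only mild subtleties are the translation-covariance of the charge density (implicit in the LSM setting and used in~(\ref{translation})) and the verification that $T_-=-Q_{[0]}$ respects convention~(\ref{choice of T-}), both of which reduce to integrality of $\sigma(q_Z)$ and the product structure $\totalgraph=\mathbb{T}\,\Box\,\totalgraph'$.
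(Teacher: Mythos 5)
Your proposal is correct and follows essentially the same route as the paper: translation invariance gives $[\T,P]=0$ exactly, $\T$ is locality- and charge-preserving, the natural choice $T_-=-Q_{[0]}$ satisfies the normalization~(\ref{choice of T-}) by integrality of the commuting local charges, and Theorem~\ref{thm:Index} then yields the claim up to a harmless sign. (Only a cosmetic quibble: for $R_Q\geq 1$ one has $Q_\halftorus-Q_{[0]}\neq Q_{\halftorus\setminus[0]}$ as written, but your actual argument only needs that this difference is an integer combination of the commuting integer-spectrum $q_Z$, which is true.)
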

Note that $\langle Q_{[0]}\rangle_P$ in the statement cannot be expected to be convergent as the volume $|\totalgraph|$ grows, which renders the claim somewhat unfamiliar. This corollary becomes in particular useful when one has full translation invariance, i.e.\ not only in the $x_1$-direction. To be very specific, we consider two-dimensional $L_1\times L_2$-tori $\Lambda$, that is, we specify now $\Lambda' =\bbZ/L_2 \bbZ$ with diameter $L=\lfloor\frac{L_1+L_2}{2}\rfloor$. We specify a particular sequence of tori by picking a function $L\mapsto (L_1,L_2)$ satisfying $L=\lfloor\frac{L_1+L_2}{2}\rfloor$. We choose this function such that $L_2$ runs through all positive integers and that both $L_1,L_2>cL$.  
Let us now assume that the total charge density $\rho_{L}:= \frac{1}{|\Lambda|} \langle Q_\Lambda \rangle_P$ converges to a limiting density $\rho$ fast enough, namely $L (\rho - \rho_{L}) \to 0$. This is a natural assumption because for gapped systems we would expect to be able to choose the boundary conditions so that local observables in the bulk approach their thermodynamic values (almost) exponentially fast, see~\cite{Sven,de2015local}.  
With this, Corollary \ref{cor:LSM} implies
\begin{equation} \label{eq: lsm physical}
\fraction\rho \in \bbZ.
\end{equation}
Indeed, from the index theorem we get that $pL_2\rho_L\in \fatlattice$. Writing $L_2\rho_L= L_2\rho+L_2(\rho_L-\rho)$ and using that the assumption of fast convergence and $L_2>cL$, we get $(pL_2\rho)\, \mathrm{mod}\, 1= o(1)$. If $L_2$ runs through $\bbN$,  rational non-integer $p\rho$ is ruled out directly and irrational $\rho$ is ruled out by ergodicity of irrational rotations on the torus. \qed

\subsection{On the absence of topological order in one dimension}

It is widely accepted and proved in some specific settings~\cite{schuch2011classifying,chen2011classification,ogata2017class} that there is no intrinsic\footnote{as opposed to `symmetry protected' topological order which is not considered here.} topological order in one dimension. The present paper proves a particular version of this statement.  Namely, in one dimension and with topological order as in Assumption~\ref{ass: toporder}, the index $\mathrm{Ind}_P(U)$ is integer-valued, even when $p>1$.  For example, in the case of the Lieb-Schultz-Mattis theorem just discussed, this means that there is no topological charge fractionalization\footnote{The adjective `topological' cannot be omitted here. In the case of spontaneous symmetry breaking with a local order parameter, the index can be rational, see Section \ref{sectors}} in the ground state sector.

Indeed, in one dimension, the region $\partial_-$ is finite so that $P\ep{2\pi\iu\overline{Q}_{-}}P\eqL zP$ for some $z\in\bbC,\vert z\vert=1$, by Assumption~\ref{ass: toporder} (note that $[\ep{2\pi\iu\overline{Q}_{-}},P]\eqL 0$ in all dimensions, but the exponential acts in general non-trivially on $\mathrm{ran}(P)$). But then 
\begin{equation}
\label{eq: res as product again}
\caZ_-(2\pi) \eqL PU\str \ep{2\pi\iu \overline{Q}_{-}}U \ep{-2\pi \iu \overline{Q}_{-}}   P
\eqL PU\str P UP \vert z\vert^2 
\eqL P
  \end{equation}
since $U$ commutes with $P$ by Assumption~\ref{ass: unitary}. With topological order, ~(\ref{veryfirst}) reads
\begin{equation*}
\caZ_-(\phi) \eqL \ep{\iu\phi PD_-P} \eqL \ep{\iu\phi PT_-P} \eqL \ep{\iu\frac{\phi}{p} \Tr(P T_-)}P,
\end{equation*}
which, with~(\ref{eq: res as product again}), yields
\begin{equation*}
\mathrm{Ind}_P(U) \in \fatlattice
\end{equation*}
as claimed.

\subsection{Magnetic systems} 

We now discuss the case of systems with magnetic fields. Our main result is an interacting version of the Avron-Dana-Zak (ADZ) relation~\cite{AvronDanaZak} between the Hall conductance and the filling factor. We start with an extensive introduction of the setup and a presentation of the relation, leaving the general rigorous result for Section~\ref{sec: general framework}. See also \cite{Oshikawa} and~\cite{WatanabeDana} for another view on the same results.

\subsubsection{Harper/Hubbard Model}\label{sec: harper model}
We consider again $\Lambda=(\bbZ/L_1 \bbZ)\,\Box\, (\bbZ/L_2 \bbZ)$, i.e.\ the $L_1 \times L_2$ discrete torus with coordinates $1\leq x_{1,2} \leq L_{1,2}$ and unit vectors $\hat e_1=(1,0), \hat e_2=(0,1)$. We describe spinless fermions in a uniform magnetic field.   Let
\begin{equation} \label{frac phi}
\phi = 2\pi\frac{m}{n}
\end{equation}
with $m,n$ coprime integers, be the magnetic flux \emph{piercing} through the unit cell and let $L_2\Phi$ be the magnetic flux \emph{threaded} through the torus, see Figure~\ref{fig: harperhubbard}.
Then, in the Landau gauge, the Hamiltonian is
\begin{equation}
\label{eq: hubbard}
H_\Phi = t \sum_{x \in \Lambda} \left( \ep{ \iu ( \phi x_1-{\Phi})}  c^{*}_{x+\hat e_2} c_{x} + c^{*}_{x+\hat e_1} c_{x} + \mathrm{h.c.} \right) -\mu \sum_{x \in \Lambda}  q_{x} + \sum_{x,y \in \Lambda}u(x-y)  q_{x}q_{y}.
\end{equation}
We have written $q_{x} = c_{x}\str c_{x}$ for the occupation operators and the parameters $t,\mu,u(\cdot)$ are, respectively, the hopping strength, the chemical potential and the interaction potential. We impose $L_1 \phi \in 2\pi\mathbb{Z}$ in order that the Hamiltonian is well defined on the torus.

\begin{figure}
  \begin{center}
    \includegraphics[width=0.35\textwidth]{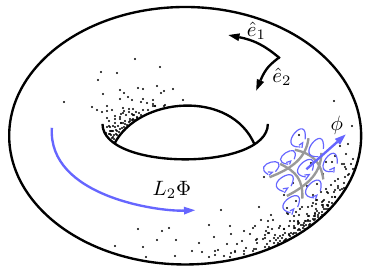}
  \end{center}
  \caption{The parameter $\phi$ is the magnetic flux piercing each unit cell on the torus, leading to a constant magnetic field. 
In contrast,  $L_2\Phi$ is the total flux threaded through a hole of the torus, it does not result in any magnetic field on the surface of the torus. }
\label{fig: harperhubbard}
\end{figure}

Let us comment on how this model fits our setup. The on-site operator $q_x$ is a concrete example of the general charge introduced in Section~\ref{sec:charge}. For $u=0$, the model is non-interacting and it is the Harper model in its second quantized version, also known as the Hofstadter model \cite{hofstadter1976energy}. By choosing $\mu$ to lie in one of the gaps of the corresponding one-particle Schr\"odinger operator, one obtains a gapped many-body Hamiltonian. In that case, fermionic perturbation theory \cite{giuliani2017universality,hastings2019stability,DeRoeck2018}  yields that the gap remains open for sufficiently weak $u$ and so Assumption \ref{ass: gap} is satisfied. This case corresponds to $p=1$. 
Although rigorous results are absent, it is believed that at strong interaction, the system exhibits topological order, hence $p>1$, and it is a fractional Quantum Hall insulator, see for example~\cite{PrangeGirvin}. The result below applies to that case as well, but we cannot establish the validity of Assumption \ref{ass: gap}. Below, we introduce two concrete unitaries that fit the general framework, namely translation and magnetic flux insertion.

\subsubsection{Translation and magnetic translation}\label{sec: magnetic translations}
The Hamiltonian (\ref{eq: hubbard}) is translation-invariant in the $x_2$-direction, but it is more interesting to consider translation in the $x_1$-direction
$$
\Theta^* c_x \Theta= c_{x+\hat e_1},\qquad   \Theta^* c^*_x \Theta= c^*_{x+\hat e_1},
$$ 
as well as the magnetic translation defined by 
$$
U^* c_x U= c_{x+\hat e_1}\ep{-\iu x_2\phi },\qquad   U^* c^*_x U= c^*_{x+\hat e_1} \ep{\iu x_2 \phi }.
$$ 
At $\phi\neq 0$, the ordinary translation is not a symmetry, but the magnetic translation is a symmetry provided that $L_2\phi\in 2\pi\bbZ$:
$$[H_{\Phi},\Theta] \neq 0,\qquad [H_{\Phi},U] = 0.$$
In that case, one can apply the index theorem with $U$ being magnetic translation and the conserved charge being the fermion number. Since, in the notation used before Corollary~\ref{cor:LSM}, we still have $T_-=Q_{[0]}$, we recover the result of the corollary. However, the conclusion~(\ref{eq: lsm physical}) fails for a non integer flux $\phi/(2\pi)$ because $L_2$ must be a multiple of $n$. Therefore the strongest result that can be obtained on the density $\rho$ is simply
$(pn)\rho \in \bbZ$, which also follows from an application of the theorem with $\Theta^n$. 
A simple check with free fermions confirms indeed that no sharper result is possible, at least not for $p=1$. In other words, in the case of magnetic systems, the density that satisfies~(\ref{eq: lsm physical}) is not the charge per unit cell, but the charge \emph{per magnetic unit cell} which is $n$ times larger. 

\subsubsection{Flux threading and Hall conductance}\label{sec: flux threading}
As illustrated in Figure~\ref{fig: harperhubbard}, the parameter $\Phi$ is the flux threaded through the torus, per unit length in $x_2$-direction. If the gap of $H_{\Phi}$ remains bounded away from $0$ as the parameter $\Phi$ goes from $\Phi$ to $\Phi'$, then the ground state projector $P_{\Phi}$ can be parallel transported to $ P_{\Phi'}$ by a locality-preserving unitary $F(\Phi,\Phi')$ (see Section~\ref{sec: general framework} for details) namely
\begin{equation}\label{eq: parallel}
F(\Phi,\Phi')\str P_{\Phi} F(\Phi,\Phi') =P_{\Phi'}.
\end{equation}
If the total threaded flux is an integer number of elementary flux quanta, i.e.\ $L_2(\Phi'-\Phi)\in 2\pi\bbZ$, then the effect of parallel transport on $P_{\Phi}$ is the same as that of a gauge transformation\footnote{It is often stressed that this is a `large' gauge transformation, referring to  the fact that it is not connected to the identity within the gauge group. However, on the lattice, it is not straightforward to make this distinction precise.} implemented by the unitary 
$$\caF_{\Delta\Phi}=\ep{-\iu \Delta\Phi \sum_{x\in\Lambda} {x_2} n_x}, \qquad \Delta\Phi=\Phi'-\Phi.$$ This follows because in that case
\begin{equation}\label{eq: phionephitwo}
\caF_{\Delta\Phi} H_{\Phi}\caF_{\Delta\Phi}\str =H_{\Phi' }.
\end{equation}
Combining (\ref{eq: parallel}) and (\ref{eq: phionephitwo}), we obtain indeed
$$
  [  P_{\Phi},  F(\Phi,\Phi')\caF_{\Delta\Phi}]=0.
$$
Therefore, the unitary $F(\Phi,\Phi')\caF_{\Delta\Phi}$ satisfies Assumption~\ref{ass: unitary} and it defines an associated index of $H_{\Phi}$. By the Laughlin argument, this index is the number of threaded flux quanta $L_2\Delta\Phi/(2\pi)$ times the quantum Hall conductance. This is discussed in~\cite{BBDF}, where we also provide an explicit proof relating this index (more precisely, an equivalent one) to the adiabatic curvature and \cite{bachmann2018note} details the relation of the adiabatic curvature to other expressions for Hall conductance in a many-body setting.

We conclude by noting that the convenient choice $\Delta\Phi\in 2\pi\bbZ$ yields $\caF_{\Delta\Phi}=1$, in which case $F(\Phi,\Phi')$ itself is a symmetry of $P_{\Phi}$. This will be exploited below. 

\subsubsection{The Avron-Dana-Zak relation}\label{sec: adz}

We are now equipped to obtain a relation between Hall conductance and charge density, taking advantage of the fact that flux threading can be intertwined with translations to yield a new symmetry, quite similarly to the case of magnetic translations. Indeed, the Hamiltonian~(\ref{eq: hubbard}) is covariant in the sense that $ H_{\Phi+\phi}  = \Theta\str H_{\Phi}\Theta $. Using this we get
$$
 P_{\Phi}=\Theta  P_{\Phi+\phi}   \Theta^*  =  \Theta F(\Phi+\phi,\Phi) P_{\Phi}  F(\Phi+\phi,\Phi)\str  \Theta^* 
$$  
so that $U : =\Theta F(\Phi+\phi,\Phi) $ is a symmetry satisfying Assumption~\ref{ass: unitary}. 
We shall later establish that
$$U^n= \Theta^n F(\Phi+n\phi,\Phi),$$
see Lemma~\ref{lma:Commuting TF}. By the remarks above and~(\ref{frac phi}), the two unitaries on the right-hand side are now symmetries in their own right. By additivity of the index, Proposition \ref{Prop:Additivity}, we conclude that
$$
\mathrm{Ind}_P(\Theta^n)+  \mathrm{Ind}_P( F(\Phi+n\phi,\Phi))  \in \frac{n}{p}\fatlattice,\quad P = P_\Phi.
$$
The first term on the left is $\langle Q_{[0,n-1]}\rangle_P=\sum_{x: 0\leq x_2<n} \langle n_x\rangle_P $, which is $n$ times the average density per slab, namely $nL_2\rho$, and the second term is $n\phi L_2$ times the Hall conductance $\sigma_L$, see again~\cite{BBDF}.
Since $L_2$ is arbitrary and if the convergence of $\rho_L,\sigma_L$ to their infinite-volume limits $\rho,\sigma$ satisfies $L_2(\rho-\rho_L),L_2(\sigma-\sigma_L) \to0 $, then the argument of Section~\ref{sec: flsm} implies that 
$$
\rho+ \phi\sigma \in \frac{1}{p}\bbZ,
$$
which is the fractional ADZ relation.

\subsubsection{A many-body Avron-Dana-Zak theorem}\label{sec: general framework}

In this concluding section, we repeat the discussion above under detailed assumptions and complete the proofs. The setting as in Section~\ref{sec: flsm}, with in particular the graph product
$
{\totalgraph} = {\mathbb{T}}\,\Box\, {\totalgraph}',
$
where ${\mathbb{T}}$ is the discrete circle with $L_1$ sites, and we again assume that $L_1>cL$. The unitary $\T$ is a translation along $\bbT$. The operator $Q$ continues to be the charge in a half (along $\bbT$) system.
We consider a family of local Hamiltonians $\{H_\Phi:\Phi\in\bbR\}$ of the form~(\ref{local Hamiltonian}),  satisfying 
\begin{assumption}\label{ass: Covariance}
\hspace{0.001mm}
\begin{enumerate}
\item  The parameters $R_H,m_H$ can be chosen uniform in $\Phi$ and the local terms $h_Z$ are  themselves $C^1$ functions of $\Phi$, such that $\Vert \partial_\Phi h_Z(\Phi) \Vert$ is bounded uniformly in $Z$ and $L$. 
\item Periodicity:
\begin{equation*}
H_{\Phi + 2\pi} = H_{\Phi}.
\end{equation*}
\item Covariance:
\begin{equation}\label{Assumption: covariance}
\T\str H_{\Phi} \T = H_{\Phi+\phi},
\end{equation}
 where 
\begin{equation*}
\phi = 2\pi \frac{m}{n},  \qquad m \in \bbZ, n \in \bbN, \quad \text{$m,n$ coprime}.
\end{equation*}
\item Assumption \ref{ass: gap} holds uniformly in $\Phi$, where $P=P_\Phi$ are the ground state projectors. This implies in particular that the rank $\fraction=\mathrm{rk}(P_\Phi)$ is independent of $\Phi$.
\end{enumerate}
\end{assumption}
\noindent While the assumption is obviously motivated by the Hamiltonian~(\ref{eq: hubbard}) and the corresponding physical phenomenology, $H_\Phi$ below is not restricted to that specific form; we only impose that Assumption~\ref{ass: Covariance} holds.

Items (i,iv) imply  that $\Phi\mapsto P_\Phi$ is itself differentiable and periodic,
\begin{equation}\label{P Gauge Invariance}
P_{\Phi + 2\pi} = P_{\Phi}
\end{equation}
for any $\Phi\in\bbR$. Items (ii,iii) lead to  
\begin{equation*}
[\T^n,P_\Phi] = 0,
\end{equation*}
and hence $\T^n$ satisfies Assumption~\ref{ass: unitary}.
Furthermore, it follows that there exists a locality-preserving, charge conserving unitary propagator $F(\Phi,\Phi')$, see~\cite{HastingsWen,Sven} and the paragraph above (\ref{cocycle}), such that
\begin{equation}\label{Parallel transport}
P_\Phi F(\Phi,\Phi') = F(\Phi,\Phi') P_{\Phi'}
\end{equation}
exactly, for any $\Phi,\Phi'\in\bbR$. In particular, we get for $P := P_0$ that
\begin{equation*}
[F,P] = 0,
\end{equation*}
where $F := F(n\phi,0)$. In other words, the unitary $F$ also satisfies Assumption~\ref{ass: unitary}.

\begin{thm}\label{thm:Covariant Index}
Let Assumption~\ref{ass: Covariance} hold. Then
\begin{equation*}
\frac{\fraction}{n}\left(\mathrm{Ind}_P(\T^n) + \mathrm{Ind}_P(F) \right) \in\bbZ_{\caO(L^{-\infty})}.
\end{equation*}
\end{thm}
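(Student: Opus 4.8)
The plan is to reduce Theorem~\ref{thm:Covariant Index} to the additivity proposition by showing that $U := \T F(\phi,0)$ is a symmetry of $P=P_0$ whose $n$-th power factors as $\T^n F(n\phi,0)$, up to $\vsmall$. First I would verify that $U$ satisfies Assumption~\ref{ass: unitary}: it is locality-preserving as a product of a translation and a locality-preserving propagator, it conserves charge (both factors do, the propagator because its generator is built from charge-conserving terms as in~(\ref{HastingsGenerator})), and it commutes with $P_0$ because by covariance~(\ref{Assumption: covariance}) and periodicity we have $P_0 = \T P_\phi \T\str$ while~(\ref{Parallel transport}) gives $P_\phi = F(\phi,0) P_0 F(\phi,0)\str$, so $P_0 = \T F(\phi,0) P_0 F(\phi,0)\str \T\str = U P_0 U\str$.

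The key combinatorial step is the cocycle-type identity $U^n \eqL \T^n F(n\phi,0)$. I expect this to follow from an intertwining relation between translations and the flux propagator: since $\T\str H_\Phi \T = H_{\Phi+\phi}$ for all $\Phi$, conjugating the defining ODE for $F(\Phi,\Phi')$ by $\T$ should give $\T\str F(\Phi,\Phi')\T \eqL F(\Phi+\phi,\Phi'+\phi)$, because the quasi-adiabatic generator is constructed covariantly from $H_\Phi$ (this is the content of the promised Lemma~\ref{lma:Commuting TF}). Granting that, one computes inductively
\begin{equation*}
U^n = (\T F(\phi,0))^n \eqL \T^n \big(\T^{-(n-1)}F(\phi,0)\T^{n-1}\big)\cdots\big(\T^{-1}F(\phi,0)\T\big) F(\phi,0) \eqL \T^n F((n-1)\phi,(n-2)\phi)\cdots F(\phi,0),
\end{equation*}
and the chain of propagators telescopes to $F(n\phi,0)=F$ by the cocycle property $F(\Phi,\Phi')F(\Phi',\Phi'')=F(\Phi,\Phi'')$. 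One must also check that the error terms, being $\vsmall$ and only $n$ of them with $n$ fixed independent of $L$, still sum to $\vsmall$; this is routine.

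Once $U^n \eqL \T^n F$ is established, I would apply additivity. Both $\T^n$ and $F$ are symmetries of $P$ in their own right (Assumption~\ref{ass: unitary}), so Proposition~\ref{Prop:Additivity} applied $n-1$ times — together with the fact that $\mathrm{Ind}_P$ is invariant under the $\vsmall$-perturbation relating $U^n$ and $\T^n F$, and under conjugation of the transported-charge operators by the locality-preserving factors as in~(\ref{Tr-additivity}) — gives $\mathrm{Ind}_P(U^n) \eqL \mathrm{Ind}_P(\T^n) + \mathrm{Ind}_P(F)$. On the other hand, Proposition~\ref{Prop:Additivity} applied to the $n$-fold product $U\cdots U$ gives $\mathrm{Ind}_P(U^n) \eqL n\,\mathrm{Ind}_P(U)$. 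Since Theorem~\ref{thm:Index} (in the form of Remark~\ref{rem: how to check convention}, i.e.\ $\fraction\langle T_-\rangle_P \in \fatlattice$) applies to the single symmetry $U$, we get $\fraction\,\mathrm{Ind}_P(U) = \Tr(P T(U)_-) \in \fatlattice$, hence
\begin{equation*}
\frac{\fraction}{n}\big(\mathrm{Ind}_P(\T^n)+\mathrm{Ind}_P(F)\big) \eqL \frac{\fraction}{n}\,\mathrm{Ind}_P(U^n) \eqL \fraction\,\mathrm{Ind}_P(U) \in \fatlattice.
\end{equation*}
The main obstacle I anticipate is proving the intertwining/cocycle identity $\T\str F(\Phi,\Phi')\T \eqL F(\Phi+\phi,\Phi'+\phi)$ rigorously with $\vsmall$ control — this requires being careful about the construction of $F$ from the Hastings generator~(\ref{HastingsGenerator}) with the $\Phi$-derivative of $H_\Phi$ in place of $[H,Q]$, checking that covariance of $H_\Phi$ passes to its ground-state projector's derivative and to the integral kernel $W$, and keeping the Lieb--Robinson error terms uniform; everything downstream is bookkeeping with additivity and the index theorem already proved.
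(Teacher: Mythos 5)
Your proposal is correct and follows essentially the same route as the paper: define $U=\T F(\Phi+\phi,\Phi)$, show by covariance of the quasi-adiabatic generator and uniqueness of the ODE that $\T$ intertwines the propagators, telescope via the cocycle relation to get $U^n=\T^n F$, and conclude with additivity (Proposition~\ref{Prop:Additivity}) and Theorem~\ref{thm:Index}. The one difference is that the anticipated ``main obstacle'' is absent in the paper: since $\T\str H_\Phi\T=H_{\Phi+\phi}$ holds exactly, so does $\T\str A_\Phi\T=A_{\Phi+\phi}$, and hence the intertwining $F(\Phi,\Phi')\T^k=\T^k F(\Phi+k\phi,\Phi'+k\phi)$ and the identity $U(\Phi)^k=\T^k F(\Phi+k\phi,\Phi)$ are exact identities (Lemmas~\ref{lma:covariance} and~\ref{lma:Commuting TF}), requiring no $\vsmall$ or Lieb--Robinson error control at that stage.
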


We now turn to the proof of Theorem~\ref{thm:Covariant Index}. By the gap assumption, the self-adjoint family $A_\Phi$ of so-called quasi-adiabatic generators defined by
\begin{equation}\label{QAGen}
A_\Phi := \int W(t) \ep{\iu t H_\Phi}\dot H_\Phi \ep{-\iu t H_\Phi} \dd t,
\end{equation}
where $\dot{} = \partial_\Phi$ and $W$ was introduced after~(\ref{HastingsGenerator}), are such that
\begin{equation*}
\iu \dot P_\Phi = [A_\Phi,P_\Phi].
\end{equation*}
The corresponding unitary propagator $F(\Phi,\Phi')$ is the unique solution of
\begin{equation*}
\iu\dot F(\Phi,\Phi') = A_\Phi F(\Phi,\Phi'),\qquad F(\Phi',\Phi') = 1,
\end{equation*}
and it is an intertwiner between $P_{\Phi'}$ and $P_{\Phi}$, see~(\ref{Parallel transport}). It satisfies the cocycle relation
\begin{equation}\label{cocycle}
F(\Phi'',\Phi)F(\Phi,\Phi') = F(\Phi'',\Phi').
\end{equation}
The covariance assumption and~(\ref{QAGen}) imply the following relation between translations and quasi-adiabatic propagator.

\begin{lemma}\label{lma:covariance}
For any $k\in\bbZ$ and any $\Phi,\Phi'\in\bbR$,
\begin{equation*}
F(\Phi,\Phi') \T^k =  \T^k F(\Phi+k\phi,\Phi'+k\phi).
\end{equation*}
\end{lemma}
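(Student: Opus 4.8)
The plan is to prove the relation by differentiating both sides with respect to $\Phi$ and invoking uniqueness of the solution of the defining ODE for the propagator $F$. The key input is a conjugation identity for the quasi-adiabatic generator: from the covariance assumption~(\ref{Assumption: covariance}), $\T\str H_\Phi\T = H_{\Phi+\phi}$, and hence, iterating, $\T^k{}\str H_\Phi \T^k = H_{\Phi+k\phi}$. Differentiating in $\Phi$ gives $\T^k{}\str \dot H_\Phi \T^k = \dot H_{\Phi+k\phi}$, and conjugating the Heisenberg evolution $\ep{\iu t H_\Phi}\,\cdot\,\ep{-\iu t H_\Phi}$ by $\T^k$ turns it into evolution generated by $H_{\Phi+k\phi}$. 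Plugging these into the integral formula~(\ref{QAGen}) for $A_\Phi$ yields the clean identity
\begin{equation*}
\T^k{}\str A_\Phi \T^k = A_{\Phi + k\phi},
\end{equation*}
which is the heart of the argument.

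With this in hand, set $G(\Phi) := \T^k F(\Phi+k\phi,\Phi'+k\phi)\T^k{}\str$ and compare it to $F(\Phi,\Phi')$. Both are unitary-valued functions of $\Phi$ that equal $1$ at $\Phi=\Phi'$ (for $G$, this is because $F(\Phi'+k\phi,\Phi'+k\phi)=1$). Differentiating,
\begin{equation*}
\iu \dot G(\Phi) = \T^k \big(A_{\Phi+k\phi} F(\Phi+k\phi,\Phi'+k\phi)\big) \T^k{}\str = \big(\T^k A_{\Phi+k\phi}\T^k{}\str\big) G(\Phi) = A_\Phi\, G(\Phi),
\end{equation*}
using the conjugation identity in the last step. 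So $G$ solves the same linear ODE $\iu\dot Y = A_\Phi Y$ with the same initial condition $Y(\Phi')=1$ as $F(\cdot,\Phi')$. By uniqueness of solutions to this ODE, $G(\Phi) = F(\Phi,\Phi')$, i.e.\ $\T^k F(\Phi+k\phi,\Phi'+k\phi)\T^k{}\str = F(\Phi,\Phi')$, which rearranges to the claimed $F(\Phi,\Phi')\T^k = \T^k F(\Phi+k\phi,\Phi'+k\phi)$. The case of general $k\in\bbZ$ (including negative $k$) follows either by running the same argument with $\T^k{}\str$ in place of $\T^k$, or by noting that $\T$ is unitary so the iterated covariance relation holds for all integer powers.

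I do not expect a serious obstacle here; this is essentially a bookkeeping argument resting on the explicit integral representation~(\ref{QAGen}). The one point requiring a little care is justifying the interchange of the conjugation by $\T^k$ with the integral over $t$ and with the functional calculus defining $A_\Phi$ — but since $W$ is integrable and everything in sight is norm-bounded uniformly (the integrand is $\dot H_\Phi$ conjugated by unitaries, of norm $\leq \Vert \dot H_\Phi\Vert$, modulated by $W(t)$), this is routine dominated-convergence-type reasoning. The mild subtlety is only that $A_\Phi$ is an unbounded-in-volume but norm-finite operator for fixed $L$, so all manipulations are legitimate at finite volume, which is the regime in which the statement is asserted.
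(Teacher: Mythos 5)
Your proposal is correct and takes essentially the same route as the paper: both derive the generator covariance $\T\str A_\Phi \T = A_{\Phi+\phi}$ (or its $k$-fold version) from $\T\str \dot H_\Phi \T = \dot H_{\Phi+\phi}$ and the integral formula~(\ref{QAGen}), and then conclude by uniqueness of solutions of the propagator ODE. The only cosmetic difference is that the paper proves the case $k=1$ and applies it $k$ times, whereas you handle general $k\in\bbZ$ in one stroke.
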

\begin{proof}
Assumption~(\ref{ass: Covariance}) implies that $\T\str \dot H_\Phi \T = \dot H_{\Phi + \phi}$, and with~(\ref{QAGen})
\begin{equation*}
\T\str A_\Phi \T = A_{\Phi+\phi }.
\end{equation*}
Hence, the operator $F_\T(\Phi) := \T\str F(\Phi,\Phi') \T$ is a solution of
\begin{equation*}
\iu\dot F_\T(\Phi) = A_{\Phi+\phi} F_\T (\Phi),\qquad F_\T(\Phi') = 1.
\end{equation*}
By uniqueness of the solution of the differential equation, we conclude that $F_\T(\Phi) = F(\Phi+\phi,\Phi'+\phi)$. In other words,
\begin{equation*}
F(\Phi,\Phi') \T =  \T F(\Phi+\phi,\Phi'+\phi ),
\end{equation*}
a $k$-fold application of which yields the claim.
\end{proof}
\begin{lemma}\label{lma:Commuting TF}
Let $U (\Phi) := \T F(\Phi+\phi,\Phi)$. Then, for any $\Phi\in\bbR$,
\begin{equation}\label{jointU}
[U(\Phi), P_{\Phi}] = 0. 
\end{equation}
Moreover, $[U(\Phi)^k, P_{\Phi}] = 0$, and
\begin{equation}\label{U^k}
U(\Phi)^k = \T^k F(\Phi + k\phi,\Phi)
\end{equation}
for any $k\in\bbZ$.
\end{lemma}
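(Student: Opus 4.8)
The plan is to prove Lemma~\ref{lma:Commuting TF} in three stages. First I would establish the commutation~(\ref{jointU}): by the covariance assumption (Assumption~\ref{ass: Covariance}(iii)) we have $\T\str H_{\Phi}\T = H_{\Phi+\phi}$, hence $\T\str P_{\Phi}\T = P_{\Phi+\phi}$; combining this with the intertwining property~(\ref{Parallel transport}), which gives $F(\Phi+\phi,\Phi)P_{\Phi} = P_{\Phi+\phi}F(\Phi+\phi,\Phi)$, one computes
\begin{equation*}
U(\Phi)P_{\Phi} = \T F(\Phi+\phi,\Phi) P_{\Phi} = \T P_{\Phi+\phi} F(\Phi+\phi,\Phi) = P_{\Phi}\T F(\Phi+\phi,\Phi) = P_{\Phi} U(\Phi),
\end{equation*}
which is exactly~(\ref{jointU}). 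The second commutation $[U(\Phi)^k,P_\Phi]=0$ is then an immediate consequence once~(\ref{U^k}) is proved, since powers of an operator commuting with $P_\Phi$ also commute with $P_\Phi$; alternatively it follows directly from~(\ref{jointU}) by induction, independently of the product formula.

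The main step is the product formula~(\ref{U^k}), which I would prove by induction on $k\geq 1$, with the key input being Lemma~\ref{lma:covariance} and the cocycle relation~(\ref{cocycle}). The base case $k=1$ is the definition of $U(\Phi)$. For the inductive step, assuming $U(\Phi)^k = \T^k F(\Phi+k\phi,\Phi)$, write
\begin{equation*}
U(\Phi)^{k+1} = U(\Phi)\, U(\Phi)^k = \T F(\Phi+\phi,\Phi)\, \T^k F(\Phi+k\phi,\Phi).
\end{equation*}
Now I would use Lemma~\ref{lma:covariance} with exponent $k$ to move $\T^k$ past $F(\Phi+\phi,\Phi)$: the lemma gives $F(\Psi,\Psi')\T^k = \T^k F(\Psi+k\phi,\Psi'+k\phi)$, so with $\Psi=\Phi+\phi,\Psi'=\Phi$ we get $F(\Phi+\phi,\Phi)\T^k = \T^k F(\Phi+(k+1)\phi,\Phi+k\phi)$. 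Substituting,
\begin{equation*}
U(\Phi)^{k+1} = \T\,\T^k\, F(\Phi+(k+1)\phi,\Phi+k\phi)\, F(\Phi+k\phi,\Phi) = \T^{k+1} F(\Phi+(k+1)\phi,\Phi),
\end{equation*}
where the last equality is the cocycle relation~(\ref{cocycle}) with $\Phi''=\Phi+(k+1)\phi$, intermediate point $\Phi+k\phi$, and $\Phi'=\Phi$. This closes the induction for positive $k$; the cases $k\leq 0$ follow either by a symmetric argument or by taking adjoints/inverses of the relations already obtained, using that $F(\Phi,\Phi')\str = F(\Phi',\Phi)$ (a consequence of the cocycle relation and $F(\Phi,\Phi)=1$).

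I do not expect a serious obstacle here; the lemma is essentially a bookkeeping consequence of covariance and the cocycle property. The one point requiring a little care is the order of composition: $U(\Phi)^k$ means $U(\Phi)$ applied $k$ times, and one must be consistent about whether the $\T$ factors appear to the left or right and correspondingly which shifted flux arguments appear in the $F$'s. The convention in Lemma~\ref{lma:covariance}, namely $F(\Phi,\Phi')\T^k = \T^k F(\Phi+k\phi,\Phi'+k\phi)$, is what makes the telescoping work cleanly, so I would simply follow that convention throughout and verify the indices match at the single non-trivial step (commuting $F(\Phi+\phi,\Phi)$ past $\T^k$).
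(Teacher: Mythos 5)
Your proposal is correct and follows essentially the same route as the paper: the commutation~(\ref{jointU}) from covariance of $P_\Phi$ under $\T$ combined with the intertwining property~(\ref{Parallel transport}), and the product formula~(\ref{U^k}) by induction using Lemma~\ref{lma:covariance} together with the cocycle relation~(\ref{cocycle}). The only cosmetic difference is that you commute $F(\Phi+\phi,\Phi)$ past $\T^k$ once and then apply the cocycle, whereas the paper moves $\T^{k-1}$ out and back with two applications of Lemma~\ref{lma:covariance}; your handling of negative $k$ via $F(\Phi,\Phi')\str=F(\Phi',\Phi)$ is also fine.
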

\begin{proof}

Assumption~(\ref{ass: Covariance}) implies $\T P_{\Phi +\phi } \T^* = P_\Phi$ and hence by (\ref{Parallel transport}) we have
\begin{equation*}
\T F(\Phi+\phi ,\Phi) P_\Phi = \T P_{\Phi + \phi }  F(\Phi+\phi ,\Phi) = P_\Phi \T F(\Phi+\phi ,\Phi).
\end{equation*}
This proves~(\ref{jointU}), and $U(\Phi)^k P_{\Phi} = P_\Phi U(\Phi)^k$ follows by $k$-fold application. We show that (\ref{U^k}) holds by induction. Assuming that the formula holds for $k-1$, we have
\begin{equation*}
U(\Phi)^k = U(\Phi) \T^{k-1} F(\Phi + (k-1)\phi ,\Phi).
\end{equation*}
By Lemma~\ref{lma:covariance}, we have
\begin{equation*}
\T^{k-1} F(\Phi + (k-1)\phi ,\Phi) = F(\Phi,\Phi- (k-1)\phi ) \T^{k-1}
\end{equation*}
and hence by the cocycle property~(\ref{cocycle}) of the propagator
\begin{equation*}
U(\Phi)^k = \T F(\Phi+\phi ,\Phi- (k-1)\phi ) \T^{k-1}.
\end{equation*}
Another application of Lemma~\ref{lma:covariance} yields the claim. 
\end{proof}
We note that by Lemma~\ref{lma:covariance},
\begin{equation*}
U(\Phi) = F(\Phi,\Phi-\phi )\T,\qquad 
U(\Phi)^k = F(\Phi ,\Phi - k\phi )\T^k.
\end{equation*}

\begin{proof}[Proof of Theorem~\ref{thm:Covariant Index}]
Since $n \phi \in\bbZ$, we have that
\begin{equation*}
F(\Phi +  n\phi ,\Phi ) P_\Phi  = P_\Phi F(\Phi + n\phi ,\Phi )
\end{equation*}
by~(\ref{P Gauge Invariance}). Setting $k=n$ in Lemma~\ref{lma:Commuting TF}, we further conclude that $P_\Phi$ is also invariant under $\T^n$. It follows from Theorem~\ref{thm:Index} that both indices $\mathrm{Ind}_{P_\Phi}(\T^n),\: \mathrm{Ind}_{P_\Phi}(F)$ are well defined with denominator $\fraction$. Finally, the same applies to $U(\Phi)$ and $U(\Phi)^n$ by Lemma~\ref{lma:Commuting TF}. Altogether,
\begin{align*}
\mathrm{Ind}_{P_\Phi}(\T^n) + \mathrm{Ind}_{P_\Phi}(F ) 
& \eqL \mathrm{Ind}_{P_\Phi}(\T^n F) \\
& \eqL \mathrm{Ind}_{P_\Phi}(U(\Phi)^n) \\
& \eqL n \mathrm{Ind}_{P_\Phi}(U(\Phi)),
\end{align*}
where the first equality is by Proposition~\ref{Prop:Additivity}, the second one is by Lemma~(\ref{lma:Commuting TF}) and the last one is additivity again. Since $\mathrm{Ind}_{P_\Phi}(U(\Phi))  \in \fraction^{-1}\bbZ_{\caO(L^{-\infty})}$, the theorem follows.
\end{proof}

\section{Superselection sectors}\label{sec:sectors}

In many physically relevant situations, the topological order assumption is violated, and ground states can de distinguished by a local order parameter. The prime example thereof is dimerization, and more generally the breaking of a discrete symmetry. The projection $P$ decomposes as
\begin{equation}\label{sectors}
P=\oplus_{m=1}^M P_m,
\end{equation}
where $P_m = P_m\str = P_m^2$ and $\ran(P_m)$ are extremal in a sense to be made clear below. We refer to $\ran(P_m)$ as the \emph{superselection sectors} of the system and let
\begin{equation*}
\fraction_m := \rk(P_m).
\end{equation*}
As usual, $M,\fraction_m$ are finite and fixed, independent of $L$ for $L$ large enough. We denote $\langle \cdot \rangle_m:=\fraction_m^{-1}\Tr (P_m \cdot)$.

 \begin{assumption} \label{ass: superselection}
 The decomposition~(\ref{sectors}) is such that:
\begin{enumerate}
\item  For any $O\in\caA_Z$ of norm $1$ and with $\diam(Z) \leq C$,
\begin{equation}\label{Orthogonal sectors}
P_{m'}OP_m \eqL 0
\end{equation}
whenever $m\neq m'$.
\item  There are self-adjoint observables $A_1\in\caA_X, A_2\in\caA_Y$ with $\mathrm{diam}(X)\leq C$, $\mathrm{diam}(Y) \leq C$ and $\dist(X,Y) >cL$, such that
$$
\langle A_1 \rangle_{m} = \langle A_2 \rangle_{m},
$$
for all $m$, while 
\begin{equation*}
\vert \langle A_1 \rangle_{m} - \langle A_1 \rangle_{m'}\vert \geq c>0,\qquad m\neq m'.
\end{equation*}
\item Each $P_m$ satisfies the topological order Assumption \ref{ass: toporder}.
\end{enumerate} 
\end{assumption}

\begin{rem}
Assumption~(ii) postulates the existence of observables that detect the local order. In the case of a translation-invariant system, $A_2$ is a translate of $A_1$. In other situations, $A_2$ might be a linear combination of more natural physical observables that achieves the equality of expectation values between $A_2$ and $A_1$. Points~(i) and~(iii) of the assumptions imply that $P_m$ are extremal in the sense that a finer decomposition of $P$ also satisfying~(i) and~(iii) cannot exist.
\end{rem}

These assumptions imply the following specific form of the unitary $U$. Let $S_M$ be the symmetric group on $\{1,\ldots,M\}$.

 \begin{lemma}\label{lma: bijection u}
Let $\caU_{m,l}:=P_m U P_l$, and $\caU := PUP$. There is a permutation $\pi_U\in S_M$ such that
$$
\caU \eqL \oplus_{m=1}^M \caU_{\pi_U(m),m}.
$$
In particular, 
\begin{align*}
(i) \quad &\caU_{l,m} \eqL 0\text{ unless }l=\pi_U(m),\\
(ii) \quad &\rk(P_{\pi_U(m)})=\rk(P_{m}), \\ 
(iii) \quad &U P_m U\str \eqL P_{\pi_U(m)}.
\end{align*}
\end{lemma}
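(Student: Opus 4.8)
The plan is to show that the unitary $U$ acts on the superselection sectors essentially as a permutation of the index set $\{1,\ldots,M\}$, by testing the block structure of $\caU = PUP$ against the local order parameters provided by Assumption~\ref{ass: superselection}(ii). First I would observe that since $[U,P]\eqL 0$, Lemma~\ref{lemma: PUP vs commutator}(i) gives $\caU^*\caU \eqL P$, so $\caU$ is (approximately) a unitary on $\ran(P)$; the same holds for $PU^*P$, which is its approximate inverse. Thus in the block decomposition $\caU = \sum_{l,m} \caU_{l,m}$ we have $\sum_l \caU_{l,m}^*\caU_{l,m} \eqL P_m$ and $\sum_m \caU_{l,m}\caU_{l,m}^* \eqL P_l$.

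Next I would use the local order parameters to force a single nonzero block in each row and column. The key identity is that for the self-adjoint $A_1 \in \caA_X$ of Assumption~\ref{ass: superselection}(ii), since $U$ is locality-preserving, $U^* A_1 U$ is almost supported near $X$, hence (by choosing $A_2$ to be the far-away partner, or by a clustering argument) its expectation in any ground state vector is governed by the $A_1$-labels. Concretely, I would compute $P U^* A_1 U P$ in two ways: on one hand it equals $\sum_{l,m,m'} \caU_{l,m}^* (P_l A_1 P_{l'}) \caU_{l',m'}$, which by Assumption~\ref{ass: superselection}(i) collapses to $\sum_{l,m,m'}\caU_{l,m}^* (\langle A_1\rangle_l P_l)\caU_{l,m'} \eqL \sum_m \big(\sum_l \langle A_1\rangle_l \caU_{l,m}^*\caU_{l,m}\big)$ plus the block-diagonal constraint; on the other hand, by Assumption~\ref{ass: unitary}, locality-preservation and clustering (using that $A_1$ can be moved to a translate/partner far from any fixed local region), $P U^* A_1 U P$ has expectation in $\ran(P_m)$ equal to some value $\mu_m$ that is \emph{independent of the vector in $\ran(P_m)$} and takes values in the set $\{\langle A_1\rangle_1,\ldots,\langle A_1\rangle_M\}$ — here the separation $\dist(X,Y)>cL$ and Assumption~\ref{ass: superselection}(ii) ensure that $\langle U^*A_1 U\rangle$ on $\ran(P_m)$ coincides with $\langle A_1\rangle_{\sigma(m)}$ for a well-defined $\sigma(m)$. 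Comparing, $\sum_l \langle A_1\rangle_l \|\caU_{l,m}\psi\|^2 \eqL \langle A_1\rangle_{\sigma(m)}\|\psi\|^2$ for all $\psi\in\ran(P_m)$, and since the $\langle A_1\rangle_l$ are pairwise $c$-separated while $\sum_l\|\caU_{l,m}\psi\|^2 \eqL \|\psi\|^2$, this forces $\caU_{l,m}\eqL 0$ unless $l=\sigma(m)$, which is (i) with $\pi_U:=\sigma$.

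Then $\pi_U$ is a bijection: injectivity because $\sum_m \caU_{l,m}\caU_{l,m}^* \eqL P_l$ is nonzero for each $l$, so every $l$ is hit; surjectivity onto a set of the same cardinality then makes it a permutation, giving the claimed direct-sum form $\caU \eqL \oplus_m \caU_{\pi_U(m),m}$. Statement (ii) follows because $\caU_{\pi_U(m),m}: \ran(P_m)\to\ran(P_{\pi_U(m)})$ is an approximate isometry with approximate-isometry inverse coming from $PU^*P$, and ranks are integers insensitive to $\vsmall$ corrections, so $\rk(P_{\pi_U(m)})=\rk(P_m)$ exactly. Statement (iii) follows by computing $U P_m U^*$: from (i), $P U P_m U^* P \eqL \caU_{\pi_U(m),m}\caU_{\pi_U(m),m}^* \eqL P_{\pi_U(m)}$, and since $[U,P]\eqL 0$ one has $U P_m U^* \eqL P U P_m U^* P$, giving $U P_m U^*\eqL P_{\pi_U(m)}$.

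The main obstacle I anticipate is making rigorous the claim that $\langle U^* A_1 U\rangle$ restricted to $\ran(P_m)$ equals exactly one of the numbers $\langle A_1\rangle_l$ (up to $\vsmall$), i.e.\ that the "smeared" observable $U^*A_1 U$ still reads off a sharp sector label. This requires combining three ingredients carefully: locality-preservation of $U$ (so $U^*A_1U\in\caA_{X}$ in the almost-supported sense), the topological order Assumption~\ref{ass: toporder} applied to each $P_m$ (so that $\langle U^*A_1U\rangle$ on $\ran(P_m)$ is vector-independent), and the separation condition in Assumption~\ref{ass: superselection}(ii) together with clustering to link the value back to the global labels — one must check the error terms $|X|,|Y|,|{\totalgraph}|$ are absorbed into $\vsmall$ using $\dist(X,Y)>cL$. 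Everything else is bookkeeping with the block decomposition and the elementary fact that an approximately isometric, approximately surjective block matrix with well-separated "eigenvalue labels" is approximately a generalized permutation matrix.
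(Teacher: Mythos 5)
Your overall skeleton is the same as the paper's: decompose $\caU=PUP$ into blocks, show that for a sharply localized observable $O$ one has $\langle U\str O U\rangle_m \eqL \sum_l \rho_m(l)\langle O\rangle_l$ with weights $\rho_m(l)=\langle \caU_{l,m}\str\caU_{l,m}\rangle_m$ forming a probability distribution, force this distribution to be a point mass using the order parameters of Assumption~\ref{ass: superselection}(ii), and then deduce (ii) and (iii) from bijectivity of the resulting map, the subprojection relation $UP_mU\str = P_{\pi_U(m)}UP_mU\str P_{\pi_U(m)}$ and the integrality of ranks. Those bookkeeping parts of your write-up are fine (though note that $\caU\str\caU\eqL P$ follows directly from $(1-P)UP\eqL 0$, not from Lemma~\ref{lemma: PUP vs commutator}(i), which only controls $\Vert PUP\Vert$).

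However, the step you yourself flag as the "main obstacle" is precisely the mathematical heart of the lemma, and your proposal asserts it rather than proves it: you claim that $\langle U\str A_1 U\rangle$ on $\ran(P_m)$ equals one of the sharp values $\langle A_1\rangle_{\sigma(m)}$, justified only by an appeal to locality, clustering and the separation $\dist(X,Y)>cL$. Sector topological order does give vector-independence of $\langle U\str A_1U\rangle_m$ (since $U\str A_1U\in\caA_X$ with $\diam(X)\leq C$), but a priori this number is only the convex combination $\bbE_m(a_1)=\sum_l\rho_m(l)\langle A_1\rangle_l$; nothing you write rules out a genuine mixture, and your subsequent "comparing" step is therefore circular. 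Note also that you cannot fix this with $A_1$ alone, e.g.\ via second moments, because $\langle A_1^2\rangle_l\neq\langle A_1\rangle_l^2$ in general. The paper's mechanism uses both observables: apply the convex-combination formula to $A_1$ and $A_2$, use clustering within the sector (Lemma~\ref{lma:subclustering}, which needs $\dist(X,Y)>cL$) to get $\bbE_m(a_1a_2)\eqL\bbE_m(a_1)\bbE_m(a_2)$, and then invoke the hypothesis $\langle A_1\rangle_l=\langle A_2\rangle_l$ for all $l$ to conclude $\bbE_m(a_1^2)\eqL\bbE_m(a_1)^2$, i.e.\ the variance of $a_1$ under $\rho_m$ is $\vsmall$; since the values $\langle A_1\rangle_l$ are pairwise separated by $c>0$, the support of $\rho_m$ is a singleton, which is exactly the concentration you need. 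Until you carry out this (or an equivalent) variance-type argument actually exploiting the equality of expectations of the two far-separated order parameters, your proof of statement (i) — and hence of the lemma — has a genuine gap.
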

\begin{proof} 
Since $[U,P] \eqL 0$ and $P_m$ are subprojections of $P$,
\begin{equation*}
\Tr(UP_mU\str O) \eqL \Tr(UP_mU\str  POP).
\end{equation*}
Since $P = \oplus_{l=1}^M P_l$, Assumption~\ref{ass: superselection}(i) now implies that for any $O$ satisfying the assumption
\begin{equation*}
\Tr(UP_mU\str O) \eqL \sum_{l=1}^M \Tr(UP_mU\str  P_lOP_l).
\end{equation*}
By~(iii) of the assumption, we conclude that
\begin{equation}\label{U proba}
  \langle U\str O U\rangle_m  \eqL \sum_{l = 1}^M \rho_m(l) \langle O\rangle_l.
\end{equation}
where, for any $1\leq m,l\leq M$,
\begin{equation*}
\rho_m(l) := \langle U\str P_l U\rangle_m = \langle (\caU_{l,m})\str \caU_{l,m}\rangle_m.
\end{equation*}
For any fixed $m$, $\rho_m$ is a probability distribution on $\{1,\ldots,M\}$, and for $O = A_j$ as in Assumption~\ref{ass: superselection}(ii),  we interpret (\ref{U proba}) as the expectation value in $\rho_m$ of a random variable $a_j:\{1,\ldots,M\}\to\bbR$ given by $a_j(l) = \langle A_j\rangle_l$. Therefore,
\begin{equation*}
\langle U\str A_j U \rangle_m = \bbE_m(a_j).
\end{equation*}
where $\bbE_m$ is the expectation value associated to $\rho_m$. In terms of the random variables $a_1,a_2$, Assumption~~\ref{ass: superselection}(ii) is rephrased as the statement that $a_1 = a_2$, and that $a_1$ is injective. Clustering, see Lemma~\ref{lma:subclustering} below, implies that
\begin{equation*}
\bbE_m(a_1a_2) \eqL \bbE_m(a_1)\bbE_m(a_2),
\end{equation*}
but since, $a_1=a_2$, we simply get
\begin{equation*}
\bbE_m(a^2_1) \eqL \bbE_m(a_1)^2.
\end{equation*}
We conclude that $a_1$ is constant on the intersection of its support with the support of $\rho_m$.  Since $a_1$ is injective, the support of $\rho_m$ is a singleton. 
This means that there is a unique $\pi(m)$ such that $\caU_{\pi(m),m}$ is non-vanishing. The invertibility of $U$ on $\ran(P)$ and $(\caU\str)_{l,m} = (\caU_{m,l})\str$ imply that $\pi$ is a bijection, proving~(i). With this,
\begin{equation*}
U P_m U\str 
=P_{\pi_U(m)} U P_m U\str P_{\pi_U(m)}
\end{equation*}
so that $U P_m U\str$ is a subprojection of $P_{\pi_U(m)}$. By the invertibility of $U$, the two must have the same rank, proving~(ii) and hence they are equal, proving~(iii).
\end{proof}
 For any $1\leq m\leq M$, let $(\pi_U\cdot m)$ denote the the cycle of the permutation $\pi_U$ containing $m$, and let $\ell_U(m)$ be its length. We then have the following generalizations of Theorem~\ref{thm:Index}.
\begin{prop} \label{prop:LTQO in P_m}
Let Assumptions \ref{ass: gap}, \ref{ass: unitary} and \ref{ass: superselection} hold. Then \\
(i) For any normalized $\Psi_m \in \ran(P_m)$,
\begin{equation*}
\fraction_m \Big\langle \Psi_m, T\big(U^{\ell_U(m)}\big)_- \Psi_m \Big\rangle  \in \fatlattice.
\end{equation*}
(ii) For any normalized $\Psi \in \oplus_{m'\in(\pi_U\cdot m)}\ran(P_{m'})$,
\begin{equation*}
\ell_U(m)\fraction_m \Big\langle \Psi, T\big(U\big)_- \Psi \Big\rangle  \in \fatlattice.
\end{equation*}
\end{prop}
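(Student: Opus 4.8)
\textbf{Proof plan for Proposition~\ref{prop:LTQO in P_m}.}
The plan is to reduce both statements to Corollary~\ref{cor: top order} applied within a single superselection sector. For part~(i), fix $m$ and set $\ell=\ell_U(m)$. By Lemma~\ref{lma: bijection u}(iii), $U P_m U\str \eqL P_{\pi_U(m)}$, and iterating along the cycle containing $m$ gives $U^{\ell} P_m U^{\ell\,*} \eqL P_{m}$, so $[U^{\ell},P_m]\eqL 0$. Since $U^{\ell}$ is again locality-preserving and charge-conserving (Assumption~\ref{ass: unitary} is stable under taking powers and products, as noted before Proposition~\ref{Prop:Additivity}), the pair $(U^{\ell},P_m)$ satisfies Assumptions~\ref{ass: gap} and~\ref{ass: unitary} with the rank-$\fraction_m$ projector $P_m$ in place of $P$ — except that Assumption~\ref{ass: gap} was phrased for the full spectral patch. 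Here the key point is that $P_m$ is a \emph{spectral} subprojection only in the trivial sense, but all we actually used in the proof of Theorem~\ref{thm:Index} about $P$ is the gap, the clustering Proposition~\ref{prop:clustering}, and $[U,P]\eqL0$; clustering for states in $\ran(P_m)\subset\ran(P)$ is inherited from~(\ref{Assumption: Clustering}). Assumption~\ref{ass: superselection}(iii) gives topological order for $P_m$, so Corollary~\ref{cor: top order} applies verbatim to $(U^{\ell},P_m)$ and yields $\fraction_m\langle\Psi_m,T(U^{\ell})_-\Psi_m\rangle\in\fatlattice$ for any normalized $\Psi_m\in\ran(P_m)$. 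This is exactly~(i).

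For part~(ii), I would use the additivity of the index together with the cycle structure. Let $m_1=m, m_2=\pi_U(m),\ldots,m_\ell$ be the cycle, and write $P_\bullet:=\oplus_{j=1}^{\ell}P_{m_j}$, which commutes with $U$ up to $\vsmall$ because $U$ permutes these sectors cyclically. Applying~(i) with base point $m_j$ and the choice~(\ref{restriction of product}) for the transport operator of a power, one has $T(U^{\ell})_- = \sum_{r=0}^{\ell-1}(U^r)\str T(U)_- U^r$, hence by unitary invariance of the trace and $U^{\ell}P_{m_j}U^{-\ell}\eqL P_{m_j}$,
\begin{equation*}
\fraction_{m_j}\,\langle T(U^{\ell})_-\rangle_{m_j} \eqL \sum_{r=0}^{\ell-1}\fraction_{m_j}\,\langle T(U)_-\rangle_{\pi_U^{r}(m_j)} = \sum_{j'=1}^{\ell}\fraction_{m_{j'}}\langle T(U)_-\rangle_{m_{j'}},
\end{equation*}
using Lemma~\ref{lma: bijection u}(ii) so that $\fraction_{m_j}=\fraction_{m_{j'}}=:\fraction_\bullet$ for all $j'$ in the cycle. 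Thus the right-hand side is independent of the base point $j$, call it $\fraction_\bullet\,S$ with $S:=\sum_{j'}\langle T(U)_-\rangle_{m_{j'}}$, and~(i) says $\fraction_\bullet S\in\fatlattice$. Now for an arbitrary normalized $\Psi\in\ran(P_\bullet)$, expand $\langle\Psi,T(U)_-\Psi\rangle$; since $T(U)_-\in\caA_{\partial_-}$ can be approximated by sums of local terms, Assumption~\ref{ass: superselection}(i) forces the off-diagonal-in-$m'$ contributions to vanish up to $\vsmall$, and Assumption~\ref{ass: superselection}(iii) (topological order in each $P_{m'}$) forces the diagonal block in $\ran(P_{m'})$ to be the scalar $\langle T(U)_-\rangle_{m'}$ times the identity. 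Hence $\langle\Psi,T(U)_-\Psi\rangle\eqL\sum_{m'\in(\pi_U\cdot m)}\|P_{m'}\Psi\|^2\,\langle T(U)_-\rangle_{m'}$, a convex combination of the numbers $\langle T(U)_-\rangle_{m'}$. Multiplying by $\ell\fraction_m=\ell\fraction_\bullet$ and using $\ell\fraction_\bullet\langle T(U)_-\rangle_{m'} = \fraction_\bullet S - \fraction_\bullet\sum_{j'\ne \text{(the one hitting }m')}\cdots$ — more cleanly: each individual $\ell\fraction_\bullet\langle T(U)_-\rangle_{m'}$ need not be in $\fatlattice$, but the sum $\fraction_\bullet S$ is, and a convex combination $\sum_{m'}\|P_{m'}\Psi\|^2\,\ell\fraction_\bullet\langle T(U)_-\rangle_{m'}$ differs from $\fraction_\bullet S$ by $\vsmall$ precisely when... \emph{this is where care is needed}; the correct statement is rather that $\ell\fraction_\bullet\langle T(U)_-\rangle_{m'}$ is \emph{the same number} $\bmod\,\fatlattice$ for all $m'$ in the cycle — indeed $\langle T(U)_-\rangle_{m'}-\langle T(U)_-\rangle_{m''}$ is an integer difference of half-system charges because $U$ intertwines $P_{m'}$ and $P_{m''}$ and~(i) applied with different base points shows the cyclic sum equals $\ell$ times any one of them up to $\fatlattice$ — so each $\ell\fraction_\bullet\langle T(U)_-\rangle_{m'}\eqL \fraction_\bullet S\in\fatlattice$, and the convex combination lands in $\fatlattice$ as well, giving~(ii).

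\textbf{Main obstacle.} The delicate point is the last one: showing that the per-sector quantity $\ell\fraction_m\langle T(U)_-\rangle_{m'}$ lies in $\fatlattice$ for \emph{every} $m'$ in the cycle, not just that the cyclic average does. The resolution is that, because $U^{\ell}$ fixes each $P_{m'}$ individually (up to $\vsmall$), part~(i) applies with \emph{any} $m'$ in the cycle as the base point, and the resulting transport operators $T(U^{\ell})_-$ computed from the two base points differ only by conjugation by a power of $U$, which is a unitary invariance of the trace together with Remark~\ref{rem: how to check convention}; hence all $\ell$ numbers coincide modulo $\fatlattice$. Once that is in hand, the convexity step is immediate. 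I expect the bookkeeping around the choice~(\ref{restriction of product}) for $T$ of a power — ensuring it satisfies the normalization~(\ref{choice of T-}) so that Remark~\ref{rem: how to check convention} is applicable — to be the only other place requiring a careful (but routine) check.
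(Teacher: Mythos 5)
For part~(i) you follow the same route as the paper (rerun the proof of Theorem~\ref{thm:Index} with the pair $(U^{\ell_U(m)},P_m)$ and then use Assumption~\ref{ass: superselection}(iii) to pass from the trace to a single vector), but you gloss over precisely the two points where work is required and which the paper isolates as separate lemmas. First, the clustering you need is \emph{not} inherited from~(\ref{Assumption: Clustering}): the proof of the theorem uses clustering with the projector itself inserted (e.g.\ $P_mVP_m\eqL P_mV_-P_mV_+P_m$ in the analogue of Lemma~\ref{lemma: V- Invariance}), whereas~(\ref{Assumption: Clustering}) inserts the full $P$; replacing $P$ by $P_m$ requires killing the cross-sector terms $P_{m'}(\cdot)P_m$ via Assumption~\ref{ass: superselection}(i) — this is exactly the paper's Lemma~\ref{lma:subclustering}. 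Second, the proof of Theorem~\ref{thm:Index} also relies on~(\ref{QbarP}), $[\overline Q,P]\eqL 0$, which for $P$ comes from the gap because $P$ is a spectral projector; $P_m$ is \emph{not} a spectral projector of $H$, so $[\overline Q,P_m]\eqL 0$ does not come for free and must be derived from $P_m=PP_mP$, the locality of $\overline Q$ and Assumption~\ref{ass: superselection}(i), as the paper does in a separate lemma. Your list ``gap, clustering, $[U,P]\eqL 0$'' omits this ingredient entirely.

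For part~(ii) your route is genuinely different from the paper's, and the step you yourself flag as delicate is where it breaks. What part~(i) together with $T(U^{\ell})_-=\sum_{r=0}^{\ell-1}(U^r)\str T(U)_-U^r$ gives you is quantization of the \emph{cyclic sum} $\fraction_m\sum_{m'\in(\pi_U\cdot m)}\langle T(U)_-\rangle_{m'}$; changing the base point $m_j$ reproduces the same sum, so it yields no information about the individual terms, and the claim that each $\ell_U(m)\fraction_m\langle T(U)_-\rangle_{m'}$ equals (even modulo $\fatlattice$) the cyclic sum is unsupported and false in general: for a dimerized chain with two symmetry-broken sectors whose on-site charges are, say, $0.7$ and $0.3$ (summing to $1$ per unit cell), one has $\langle T(\T)_-\rangle_1=-0.7$, $\langle T(\T)_-\rangle_2=-0.3$, so the cyclic sum is $-1\in\bbZ$ while $2\langle T(\T)_-\rangle_{m'}$ is not near an integer for either sector; consequently the convexity step cannot be repaired along these lines. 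The paper's proof avoids the issue altogether: it applies Theorem~\ref{thm:Index} directly to the projector $P_\bullet=\oplus_{m'\in(\pi_U\cdot m)}P_{m'}$, which is $U$-invariant and has rank $\ell_U(m)\fraction_m$, thereby controlling the sector-averaged transport $\Tr(P_\bullet T(U)_-)$ — the same quantity your cyclic-sum identity controls — without ever needing each $\langle T(U)_-\rangle_{m'}$ to be separately quantized. (Note that passing from this averaged statement to an arbitrary superposition $\Psi$ across sectors is itself delicate, since the diagonal values $\langle T(U)_-\rangle_{m'}$ may differ as in the example above; your instinct that something nontrivial is needed here is correct, but the resolution you propose does not work.) Your bookkeeping remarks about~(\ref{restriction of product}), (\ref{choice of T-}) and Remark~\ref{rem: how to check convention} are fine, as is the use of Lemma~\ref{lma: bijection u}(ii),(iii) and Proposition~\ref{Prop:Additivity}.
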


We point out that the proof of Theorem~\ref{thm:Index} presented in Section~\ref{Sec: Main} does not require $P$ to satisfy Assumption~\ref{ass: gap} per se. It uses only two consequences thereof, namely (i) the clustering property~(\ref{Assumption: Clustering}) and (ii) and the invariance of $\ran(P)$ under $\overline Q$. The two lemmas below show that Assumption~\ref{ass: superselection} implies both properties for the subprojections $P_m$. 
\begin{lemma}\label{lma:subclustering}
Let $\Psi_m\in\mathrm{Ran}(P_m)$ be normalized and let $A\in\caA_X,B\in\caA_Y$ be of norm $1$ and with $d(X,Y)>cL$. If one of $A,B$ is an observable as in Assumption~\ref{ass: superselection}(i), then 
\begin{equation*}
\langle \Psi_m,A(1-P_m)B\Psi_m\rangle
\eqL 0.
\end{equation*}
\end{lemma}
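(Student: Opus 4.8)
The plan is to reduce the statement to the full-space clustering property, Proposition~\ref{prop:clustering}, by inserting the full ground-state projector $P$ and then peeling off the unwanted sectors. Write $1-P_m = (1-P) + (P-P_m) = (1-P) + \sum_{l\neq m} P_l$. First I would handle the $(1-P)$ piece: since $\Psi_m\in\ran(P_m)\subset\ran(P)$, we have $A(1-P)B\Psi_m$ sandwiched against $P\Psi_m$, so $\langle\Psi_m, A(1-P)B\Psi_m\rangle = \langle\Psi_m, AB\Psi_m\rangle - \langle\Psi_m, APB\Psi_m\rangle$, which is $\eqL 0$ directly by Proposition~\ref{prop:clustering} (using $d(X,Y)>cL$ and that $|X|,|Y|$ are bounded, or at worst polynomially bounded, so the superpolynomial decay in $cL$ beats them). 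This step does not use Assumption~\ref{ass: superselection}(i) at all.

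The remaining task is to show $\sum_{l\neq m}\langle \Psi_m, A P_l B \Psi_m\rangle \eqL 0$, and this is where the hypothesis that one of $A,B$ is a local observable of the type in Assumption~\ref{ass: superselection}(i) enters. Suppose it is $A$ (the case of $B$ is symmetric, or one takes adjoints). Then I would argue as follows: $\langle\Psi_m, AP_lB\Psi_m\rangle = \langle \Psi_m, P_m A P_l B\Psi_m\rangle$ since $\Psi_m = P_m\Psi_m$, and $P_m A P_l \eqL 0$ for $l\neq m$ by~(\ref{Orthogonal sectors}). One must be slightly careful: there are $M-1$ such terms and each carries a norm bound, but $M$ is $L$-independent, so the sum of $\vsmall$ terms is still $\vsmall$. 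If instead $B$ is the special observable, write $\langle\Psi_m, A P_l B\Psi_m\rangle = \langle \Psi_m, A P_l B P_m\Psi_m\rangle$ and use $P_l B P_m \eqL 0$ from~(\ref{Orthogonal sectors}) with the roles of $m,m'$ swapped, together with $\|A\|=1$ and $\|P_l\|=1$.

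Combining the two parts gives $\langle\Psi_m, A(1-P_m)B\Psi_m\rangle \eqL 0$ as claimed. The only genuinely delicate point—and the one I would flag as the main obstacle—is bookkeeping the quantitative rates: Proposition~\ref{prop:clustering} produces an error of the form $\|A\|\|B\||X||Y|\,\caO(d(X,Y)^{-\infty})$, and Assumption~\ref{ass: superselection}(i) produces errors of the form $\vsmall$ for diam-bounded operators, so one should check that the separation $d(X,Y)>cL$ and the diameter bounds $\diam(X),\diam(Y)\leq C$ are exactly what make both contributions genuinely $\caO(L^{-\infty})$ in operator-norm-times-vector estimates; since $|X|,|Y|\leq C(1+\diam)^d$ are $L$-independent here, everything collapses to $\vsmall$ and the argument closes cleanly. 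No new ideas beyond Proposition~\ref{prop:clustering} and Assumption~\ref{ass: superselection}(i) are needed.
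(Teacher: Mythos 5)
Your proposal is correct and follows essentially the same route as the paper's (much terser) proof: split $1-P_m=(1-P)+\sum_{l\neq m}P_l$, dispose of the $(1-P)$ term via Proposition~\ref{prop:clustering}, and kill the remaining sector terms using $\Psi_m=P_m\Psi_m$ together with the orthogonality~(\ref{Orthogonal sectors}) from Assumption~\ref{ass: superselection}(i), with the finitely many ($M$, $L$-independent) error terms all remaining $\vsmall$. Your quantitative bookkeeping, including the observation that the superpolynomial decay in $d(X,Y)>cL$ absorbs any polynomial volume factors, matches what the paper implicitly relies on.
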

\begin{proof}
$P_m$ being a subprojection of $P$, clustering~(\ref{Assumption: Clustering}) implies that $\langle \Psi_m,A(1-P)B\Psi_m\rangle \eqL 0$. But then $P$ can be replaced by $P_m$ by~(\ref{sectors},\ref{Orthogonal sectors}). 
\end{proof}
\begin{lemma}
For any $m$,  $[\overline{Q},P_m] \eqL 0$.
\end{lemma}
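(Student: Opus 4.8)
The plan is to derive $[\overline Q,P_m]\eqL 0$ purely from the \emph{exact} invariance $[\overline Q,P]=0$ of~(\ref{QbarP}), the block structure $P=\oplus_m P_m$, and the sector orthogonality of Assumption~\ref{ass: superselection}(i), exploiting that $\overline Q$ is a sum of polynomially-many (in $L$) almost-local terms of $L$-uniformly bounded norm. First I would reduce the claim to its off-diagonal blocks: inserting $1=\sum_l P_l+(1-P)$ and using $[\overline Q,P]=0$, which forces $(1-P)\overline Q P_m=(1-P)P\overline Q P_m=0$ and likewise $P_m\overline Q(1-P)=0$, one obtains
\begin{equation*}
[\overline Q,P_m]=\sum_{l\neq m}\bigl(P_l\overline Q P_m-P_m\overline Q P_l\bigr),
\end{equation*}
so it suffices to prove $P_{m'}\overline Q P_m\eqL 0$ for every $m'\neq m$.

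Next I would produce an almost-local decomposition of $\overline Q$. From~(\ref{Qbar}), $\overline Q=Q-K_--K_+$. The half-space charge $Q=\sum_{Z\subset\halftorus}q_Z$ is already a sum of at most $C|\totalgraph|$ terms of diameter $\leq R_Q$ and norm $\leq m_Q$. Each $K_\pm$ is obtained by applying the quasi-local map $\widehat W(-\ad_H)$ to the corresponding piece $J_\pm$ of $\iu[H,Q]=J_-+J_+$; by charge conservation~(\ref{H charge conservation}) only the finitely-many $h_Z$ straddling $\partial_\pm$ contribute, so $J_\pm$ is a sum of at most $C|\partial_\pm|$ terms of diameter $\leq R_H$ and norm $\caO(1)$, and the Lieb--Robinson bound together with the superpolynomial decay of $W$ expresses $K_\pm$, up to a $\vsmall$ remainder, as a sum of at most $C|\partial_\pm|$ almost-local terms of norm $\caO(1)$. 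Altogether $\overline Q=\sum_{j=1}^N O_j+\vsmall$ with $N\leq C|\totalgraph|$ polynomial in $L$ and each $O_j\in\caA_{Z_j}$ obeying $\diam(Z_j)\leq C$ and $\|O_j\|\leq C$, all constants uniform in $j$ and $L$.

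Finally I would apply Assumption~\ref{ass: superselection}(i) to each $O_j$ (rescaling by its bounded norm, and replacing it by a strictly-supported approximant at the cost of a further $\vsmall$) to get $P_{m'}O_jP_m\eqL 0$ with a bound uniform in $j$, and then sum over $j$. The key arithmetic point is that polynomially-many $\vsmall$ sequences add up to a $\vsmall$ sequence: if $N\leq C(1+L)^d$ and each summand is bounded by $C_kL^{-k}$ for all $k$, the sum is bounded by $C_kL^{-(k-d)}$ for all $k$, hence is itself $\vsmall$. Combined with $P_{m'}(\vsmall)P_m=\vsmall$, this yields $P_{m'}\overline Q P_m\eqL 0$ for $m'\neq m$, and the reduction of the first paragraph then gives $[\overline Q,P_m]\eqL 0$.

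The only delicate point — and the main obstacle — is precisely this last bookkeeping: since $\|\overline Q\|$ grows with $|\totalgraph|$, one genuinely needs that the number of local pieces is merely polynomial in $L$ while the per-piece error is superpolynomially small. This in turn forces the decomposition of $K_\pm$ into almost-local terms to be carried out with cardinality and norm bounds uniform in $L$, which is exactly what the Lieb--Robinson estimates underpinning the definition of $\caA_Z$ provide; everything else is routine.
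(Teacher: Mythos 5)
Your proof is correct and is essentially the paper's own argument: the paper writes $[\overline Q,P_m]=P[\overline Q,P_m]P\eqL P_m[\overline Q,P_m]P_m=0$ using $[\overline Q,P]=0$ and Assumption~\ref{ass: superselection}(i), which is exactly your reduction to the off-diagonal blocks $P_{m'}\overline Q P_m$ with $m'\neq m$ followed by applying the sector-orthogonality termwise to the (almost-)local decomposition of $\overline Q$. Your explicit bookkeeping — polynomially many $\caO(1)$-norm almost-local pieces of $Q-K_--K_+$ versus superpolynomially small per-term errors — is just the content the paper compresses into the phrase ``$\overline Q$ is a sum of local terms.''
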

\begin{proof}
The equality $[\overline Q,P]\eqL 0$ holds by construction. But $P_m = PP_mP$ implies
\begin{equation*}
[\overline Q,P_m] = P [\overline Q,P_m] P \eqL P_m[\overline Q,P_m] P_m \eqL 0,
\end{equation*}
where the second equality is by Assumption~\ref{ass: superselection}(i) and the fact that $\overline Q$ is a sum of local terms.
\end{proof}

\begin{proof}[Proof of Proposition~\ref{prop:LTQO in P_m}]
By the above lemmas $P_m$ satisfies the clustering property and it is invariant under $\overline{Q}$ . Furthermore, the unitary $U^{\ell_U(m)}$ keeps $P_m$ invariant by definition of $\ell_U(m)$, namely
\begin{equation*}
[U^{\ell_U(m)},P_m] \eqL 0.
\end{equation*}
The proof and hence the result of Theorem~\ref{thm:Index} carries step by step through with $P$ replaced by $P_m$, and $U$ replaced by $U^{\ell_U(m)}$. This establishes~\emph{(i)}.

Since $(\pi_U\cdot m)$ is a cycle, $\oplus_{m'\in(\pi_U\cdot m)}\ran(P)$ is invariant under $U$. By Lemma~\ref{lma: bijection u}(ii), all factors have rank $\fraction_m$, so that the dimension of $\oplus_{m'\in(\pi_U\cdot m)}\ran(P)$ is $\ell_U(m) \fraction_m $ and the claim follows as above.
\end{proof}

\section*{Acknowledgements}

\noindent The authors would like to thank an anonymous referee for his comments on the clustering theorem which lead us to provide Proposition~\ref{prop:clustering}. The work of S.B. was supported by NSERC of Canada. M.F. was supported in part by the NSF under grant DMS-1907435. W.D.R.\ thanks the Flemish Research Fund (FWO) for support via grants G076216N and 
G098919N. A.B. was supported by VILLIUM FONDEN through the QMATH Center of Excellence (Grant No. 10059).

\section*{Data Availability}

Data sharing is not applicable to this article as no new data were created or analysed in this study.

\appendix

\section{Clustering in finite volume} \label{A:Clustering}

We prove Proposition~\ref{prop:clustering} under Assumption~\ref{ass: gap}. First of all, $\gamma>2\Delta$ implies $-\gamma+\Delta<-\Delta$. We consider a smooth, real-valued function $g$ that satisfies
\begin{equation}\label{g}
g(\omega)=\begin{cases}    0 &   \omega \geq -\Delta,  \\
1 &   \omega \leq -\gamma+ \Delta. \end{cases}
\end{equation}
For concreteness we take 
$$
g=\phi \star \theta_{(-\gamma/2)}
$$
where, for any $a\in\bbR$, $1-\theta_a$ is the shifted Heaviside function (with discontinuity at~$a$), $-\gamma/2$ lies in the interval $(-\gamma+\Delta,-\Delta)$, the product $\star$ denotes the convolution and $\phi\in C_c^\infty((-\delta,\delta))$ is a non-negative function such that $\int \phi=1$.

Let
\begin{equation*}
\caQ(O) =    g(\mathrm{\adjoint}_{H})(O)
\end{equation*}
where $\mathrm{\adjoint}_{H}(O) = [H,O]$. This is well-defined by functional calculus for self-adjoint matrices, given that the operator $O\mapsto [H,O]$ is self-adjoint with respect to the Hilbert-Schmidt inner product. We claim, and shall prove below, that the Fourier transform of $g$ is a tempered distribution whose singular support is $\{0\}$ and whose regular part vanishes at infinity as $\widehat g(t) = \caO(\vert t\vert^{-\infty})$, and so
\begin{equation}\label{Q of O}
\caQ(O) = \frac{1}{\sqrt{2\pi}}\int_{-\infty}^\infty  \widehat g (t)    \ep{\iu t H} O \ep{-\iu t H} dt
\end{equation}
is well defined. This expression further implies by the Lieb-Robinson bound that $\caQ(O)$ is almost supported in the support of $O$, see again below.

With this,
\begin{equation*}
\caQ(O) P = 0\quad\text{and}\quad P\caQ(O) = PO(1-P).
\end{equation*}
This follows from~(\ref{Q of O}), namely
\begin{equation*}
\caQ(O) = \sum_{j,k}g(E_j - E_k) P_j O P_k,
\end{equation*}
where $P_l$ is the eigenprojection corresponding to the eigenvalue $E_l$, and~(\ref{g}). Hence, for any $A\in\caA_X,B\in\caA_Y$,
\begin{align*}
\langle A B \rangle  &= \langle A P B \rangle+ \langle A (1-P) B \rangle  \\
 &= \langle A P B \rangle+ \langle \caQ(A)  B \rangle  \\
  &= \langle A P B \rangle+ \langle [\caQ(A),  B] \rangle.
\end{align*}
It remains to note that $\caQ(A)\in\caA_X$ implies $[\caQ(A),  B] = \caO(d(X,Y)^{-\infty})$. This yields the claimed clustering result.

We finally turn to the technical questions left open above. The Fourier transform of $g$ is the tempered distribution $\widehat g$ given by
$$
\widehat g  = \widehat \phi \cdot \widehat \theta_{(-\gamma/2)}, \qquad  \widehat \theta_{(-\gamma/2)} = \ep{-\iu t {\gamma/2}}\widehat\theta_0 .
$$
The function $\phi$ being smooth, we have that $\widehat \phi(t) = \caO(\vert t\vert^{-\infty})$.  Moreover, the Plemelj-Sochotcki formula  implies that
\begin{equation*}
\widehat\theta_0 = \sqrt\frac{\pi}{2}\, \delta + \frac{\iu}{\sqrt{2\pi}}\caP\left(\frac{1}{t} \right),
\end{equation*}
where $\caP$ denotes the principle value, and we recall that
\begin{equation}\label{Schwartz}
\vert  \widehat \theta_0 [\psi]  \vert   \leq \sqrt{\pi/2}\Vert \psi\Vert_{\infty} + \sqrt{2/\pi}\Vert \psi'\Vert_{\infty}
\end{equation}
for any Schwartz function $\psi$.

The smooth operator-valued function $O(t) = \ep{\iu t H} O \ep{-\iu t H}$ is bounded since $\Vert O(t)\Vert = \Vert O \Vert$, and its derivative $O'(t) = \ep{\iu t H} \iu[H,O] \ep{-\iu t H}$ is similarly bounded: $\Vert O'(t) \Vert \leq C \Vert O\Vert \vert \mathrm{supp}(O)\vert$. It follows in particular that $\ep{-\iu t {\gamma/2}}\widehat\phi(t)O(t)$ is a Schwartz operator-valued function. The expression on the right of~(\ref{Q of O}) is well-defined for any local observable~$O$, with
\begin{equation*}
\int_{-\infty}^\infty  \widehat g(t)   \ep{\iu t H} O \ep{-\iu t H} dt = \widehat \theta_0 [\ep{-\iu t {\gamma/2}}\widehat\phi(t)O(t)]
\end{equation*}
and the bound~(\ref{Schwartz}) holds in operator norm.

We can finally address the locality of $\caQ(O)$. We assume that $O$ is a local operator, the general case of an almost local one follows by approximation. The $\delta$ contribution in~$\widehat \theta_0$ is supported on the support of $O$. As for the $\caP\left(\frac{1}{t} \right)$ contribution, the integral is split as $\vert t\vert\geq T$ and $\vert t \vert< T$. The fast decay of $\widehat\phi$ yields a bound $\Vert O \Vert\caO(T^{-\infty})$ for the long time part. For the second, short time part, we write $O(t) = \Pi_R(O(t)) + (O(t) - \Pi_R(O(t)))$, where $\Pi_R$ is the projection\footnote{It is the normalized partial trace in the case of a quantum spin system, while a similar projection can be constructed for lattice fermions, see~\cite{NSY_Fermions}.} onto an $R$-fattening of the support of $O$. Since $\Pi_R(O(t))$ is smooth, the first contributions is well-defined, bounded uniformly in $T$, and local by construction. For the second one, the Lieb-Robinson bound yields
\begin{equation*}
\frac{1}{t}\Vert O(t) - \Pi_R(O(t))\Vert \leq C\Vert O\Vert\vert\mathrm{supp}(O)\vert\ep{-\xi R}\,\frac{\ep{c v \vert t\vert}-1}{t},
\end{equation*}
which is integrable at $0$ with $\int_{[-T,T]}t^{-1}\left(\ep{c v \vert t\vert}-1\right)dt\leq 2(\ep{cv T}-1)$.
It remains to pick $T = \frac{\xi}{2cv} R$ to conclude that both terms are $\caO(R^{-\infty})$. Altogether, we conclude that $\caQ(O)$ is almost localized on the support of $O$. 

\section{Notions of locality} \label{app: locality}

We consider first the case of spin systems. Then $\caH_\totalgraph \cong \otimes_{x\in\totalgraph}\bbC^n$  and $\caA \cong \otimes_{x\in\totalgraph} \bbC^{n^2}$, and so there is a natural tensor product representation  $\caA\cong  \otimes_{x\in S} \bbC^{n^2} \otimes \otimes_{x\in S^c} \bbC^{n^2}$, for any $S \subset \totalgraph$. 
The support of $O \in \caA$ is then the smallest set $S$ such that $O \cong O_S \otimes 1_{S^c}$ in this natural representation. We refer to Section~2.6 of~\cite{BRI}, see also Section~5.2.2.1 and Section~6.2.1 of~\cite{BRII} for further detailed discussions.

For fermionic systems a bit more setup is needed. The total algebra of operators is generated by $c_{x,\alpha},c\str_{x,\alpha}$ and the identity,  with $c_{x,\alpha},c\str_{x,\alpha}$ the annihilation/creation operator of a fermion at site $x\in\totalgraph$ and with  label $\alpha \in \{1,\ldots, f\}$. The label can correspond to spin states or something else. These operators satisfy the CAR
$$
\{ c_{x,\alpha},c\str_{x',\alpha'}\}=\delta_{x,x'} \delta_{\alpha,\alpha'}, \qquad   \{ c_{x,\alpha},c_{x',\alpha'}\}=\{ c\str_{x,\alpha},c\str_{x',\alpha'}\}=0.
$$
The Hilbert space $\caH_\totalgraph$ is generated by acting with these operators on the vacuum state $\Omega$, which is the common eigenvector of all annihilation operators.
The total algebra of observables is graded by fermion parity: monomials in the $c_{x,\alpha},c_{x,\alpha}\str$ of even/odd degree are called even/odd and the even monomials generate an algebra, which we call a $\caA$.
The support of $O\in \caA$ is then the smallest set $S$ such that $O$ is in the linear span of even monomials with $x$ restricted to $S$.

\section{The large $L$ asymptotic analysis}
\label{app:L}
In the main text we have systematically omitted the index $L$ but at the same time used symbols like $\vsmall$ and $\mathcal{A}_Z$ that make sense only for a sequence of operators labeled by $L$. We believe that the chance of misunderstanding coming from this convention is small, but we now provide more details for the sake of completeness.

Let $A_L$ be a sequence of operators defined on  a sequence of operator algebras $\mathcal{A}=\mathcal{A}_{L}$ associated with a sequence of graphs ${\totalgraph}_L$ with diameter $L$. Neither the graphs nor the operators $A_L$ are a priori related in any way. We say that $A_L = \vsmall$ if for any $k\in\bbN$, there exists a constant $C_k$ such that $\| A_L \| \leq C_k L^{-k}$ for all $L$. Furthermore let $Z_L \subset {\totalgraph}_L$ be a sequence of sets. The sequence $A_L$ belongs to the set $\mathcal{A}_{Z_L}$ if for all $r$ there exists a sequence of operators $A_{L,r}$ supported in $(Z_{L})_{(r)}$ such that for any $k\in\bbN$, there exists a constant $C_k$ with
$$
\|A_L - A_{L,r} \| \leq  C_k \|A_L\| |Z_L| r^{-k}
$$ 
for all $L$.

In the setting of Section~\ref{sec: spatial}, we postulate the existence of a sequence of sets $\Gamma_L \subset {\totalgraph}_L$ with boundaries $\partial_{-,L}$ and $\partial_{+,L}$ whose distance satisfies (\ref{Spatial structure}). Assumption~\ref{ass: unitary}, in particular~(\ref{eq: splitting}), then says that $T_{\pm,L} \in \mathcal{A}_{\partial_{\pm,L}}$. Assumption~\ref{ass: gap} postulates the existence of a sequence of ground state projections $P_L \in \mathcal{A}_{L}$ of $L$-independent rank $\fraction$. Projections $P_L$ are further constrained by Assumption~\ref{ass: toporder}: for any sequence of operators $A_L \in \mathcal{A}_{Z_L}$ such that $\mathrm{diam}(Z_L) < C$ and $\|A_L\|=1$, we have $P_LA_LP_L = p^{-1}\Tr(P_L A_L) P_L + O(L^{-\infty})$. Theorem~\ref{thm:Index} states that there exists a sequence of integers $n_L$ such that for any sequence $\Psi_L\in\ran(P_L)$, 
$$
\left \vert \, p\, \langle \Psi_L ,T_{-,L}\Psi_L\rangle - n_L \right\vert = \vsmall. 
$$

As a final note, we point out that for an $L \times L$ torus, $\partial_{\pm, L}$ have $2L$ sites. If $U$ is a translation by one site in the $x_1$-direction then $T_{\pm, L}$ is equal to the charge in the boundary, see~(\ref{translation}), and $n_L$ is the total charge in this circle of length~$L$. This shows that in a physically interesting setting, all these sets, operators, algebras may indeed have a non-trivial dependence on $L$.

%

\end{document}